\documentclass[pdflatex,sn-basic,icol]{sn-jnl}

\usepackage{graphicx}%
\usepackage{multirow}%
\usepackage{amsmath,amssymb,amsfonts}%
\usepackage{amsthm}%
\usepackage{mathrsfs}%
\usepackage[title]{appendix}%
\usepackage{xcolor}%
\usepackage{textcomp}%
\usepackage{manyfoot}%
\usepackage{booktabs}%
\usepackage{algorithm}%
\usepackage{algpseudocode}%
\usepackage{listings}%
\usepackage{url}%

\renewcommand{\abstractheadfont}{\normalfont\fontsize{9bp}{11bp}\selectfont\titraggedcenter}

\newcommand{\dd}{\mathrm{d}}
\newcommand{\E}{\mathbb{E}}
\newcommand{\Prob}{\mathbb{P}}
\newcommand{\ind}{\mathbf{1}}
\newcommand{\Exp}{\mathrm{Exp}}
\newcommand{\Beta}{\mathrm{Beta}}

\theoremstyle{plain}
\newtheorem{proposition}{Proposition}

\theoremstyle{definition}

\newtheorem{empiricalclaim}{Claim}

\newcommand{\placeholder}[1]{\fbox{\begin{minipage}[c][0.18\textheight][c]{0.94\linewidth}\centering #1\end{minipage}}}

\begin{document}

\title[Phantom-Conditioned Nested Sampling]{Phantom-Conditioned Nested Sampling}

\author*[1,2]{\fnm{Joshua G.} \sur{Albert}}\email{jgalbert@caltech.edu}
% Use a full stop after the sole corresponding author's email address.
\ExplSyntaxOn
\tl_replace_once:Nnn \corrauthemail { ; } { . }
\ExplSyntaxOff

\affil[1]{\orgdiv{Department of Astronomy}, \orgname{California Institute of Technology}, \orgaddress{\postcode{CA 91125}, \city{Pasadena},  \country{USA}}}
\affil[2]{\orgdiv{Leiden Observatory}, \orgname{Leiden University}, \orgaddress{\street{PO Box 9513}, \city{Leiden}, \postcode{2300}, \country{The Netherlands}}}

\abstract{
\mdseries\unboldmath
Nested sampling estimates the evidence by assigning prior volumes to an ordered sequence of likelihood contours.
Markov chain constrained-prior samplers, which are used in high-dimensional problems, generate many intermediate states before producing the next classic sample.
These intermediate states are discarded because their correlation prevents them from being inserted into the ordered NS sequence without changing its order-statistic law.
This paper introduces a novel method of using them to improve evidence estimation, by formulating NS in a Bayesian way and conditioning on phantom samples as Monte Carlo observations.
We then present the open-source software package, JAXNS v3, and its implementation choices.
We validate the approach on a set of problems, and identify its limitations via ablation.
In our experiments, when problem structure is well resolved, conditioning on all retained phantom samples reduces log-evidence RMSE by at least $30\%$, with larger improvements at higher dimensionality.
For the tested problems with unresolved structure, full phantom conditioning produces no detectable improvement or deterioration in evidence accuracy.
Phantom conditioning produces overconfident evidence uncertainties.
We also introduce two dynamic nested sampling allocation schemes.
Evidence-improving allocation approximately halves the number of likelihood evaluations required to achieve comparable evidence accuracy relative to uniform allocation.
Posterior-improving allocation doubles the classic posterior's Kish effective sample size for $17.5\%$ additional likelihood evaluations.
}

\keywords{Nested Sampling, Bayesian evidence, order statistics, exponential races, Monte Carlo, Markov-chain, phantom samples}

\maketitle

\section{Introduction}\label{sec:introduction}

Nested sampling (NS) \citep{2004AIPC..735..395S} rewrites Bayesian evidence estimation as a one-dimensional Lebesgue integration problem.
For standard background, terminology, and a broad review of modern implementations we refer to \citet{2023StSur..17..169B}.
Here we only introduce the notation needed for the present construction.
Let $(\mathcal{A},\Sigma,p)$ be a probability space and let $L:\mathcal{A}\to[0,\infty)$ be a measurable likelihood.
The evidence is
\begin{align}
    Z = \int_\mathcal{A} L(x)\,\dd p(x).\label{eq:Z_lebesgue_new}
\end{align}
Nested sampling proceeds by constructing likelihood contours and assigning prior volumes to them.
The computational difficulty is not the one-dimensional quadrature once these volumes are known, but the stochastic inference of the volume sequence.

The present paper reframes nested sampling so that the volume sequence is treated as a random variable with a race-induced prior that can be conditioned on auxiliary Monte Carlo information.
The auxiliary information comes from \textit{phantom samples}: intermediate states produced by a Markov-chain-type constrained-prior sampler while generating a \textit{classic sample}.
The method therefore does not apply to rejection-sampling constrained-prior approaches, whose rejected proposals are not draws from the constrained prior.
Phantom samples are correlated and thus cannot be used in the classic shrinkage formalism.
The reason can be understood in two ways.
Under the race view, it is because the races are no longer independent, so winner rate is no longer the sum of all rates.
Under the order-statistic view, it is because the correlation gives non-uniform probability of some permutations of orderings of likelihoods.
In this paper, we will adopt the race view, which is described in Section~\ref{sec:races}.

Despite their inadequacy for classic shrinkage, phantom samples are still drawn from the same target distribution, and thus can be used for unbiased Monte Carlo expectation estimators.
We will use them to form unbiased estimates of constrained-prior probabilities,
\begin{align}
    \Prob_{x\sim p(\cdot\mid L>a)}(L(x)>b)=\frac{X_b}{X_a},\label{eq:intro_phantom_identity}
\end{align}
which are precisely the shrinkage ratios supplied by the classic order statistics in NS.
Correlated phantom states can be valid Monte Carlo observations of an expectation while being invalid as race participants.
To our knowledge, phantom states have not previously been used in this direct way to improve the evidence calculation itself.

The contributions are as follows.
\begin{itemize}
    \item First, we formulate NS shrinkage as exponential races of lineages, introducing the \textit{race tree}, which leads to a simplified dynamic nested sampling formulation.
    \item Second, we present a Bayesian formalism for NS which properly treats plateaus.
    \item Third, building on this Bayesian presentation, we introduce phantom-conditioned NS.
    \item Finally, we present JAXNS v3, an open-source implementation of phantom-conditioned NS, validate it, and show its limitations.
\end{itemize}

The paper is structured in the following order.
Section~\ref{sec:strict_ns} introduces the race tree formulation.
Section~\ref{sec:bayesian_shrinkage} introduces a Bayesian model of which handles plateaus rigorously.
Section~\ref{sec:phantoms} uses the Bayesian model to develop phantom-conditioned NS.
Section~\ref{sec:v3} describes the JAXNS v3 implementation choices.
Section~\ref{sec:experiments} presents paired evidence-accuracy benchmarks of classic and phantom-conditioned shrinkage.

\section{Classic NS as an exponential race}
\label{sec:strict_ns}\label{sec:races}

For a threshold $\lambda$, define the strict constrained set and its prior volume by,
\begin{align}
    \mathcal{A}_\lambda &\triangleq \{x\in\mathcal{A}: L(x)>\lambda\},\\
    X_\lambda &\triangleq p(\mathcal{A}_\lambda).
    \label{eq:X_lambda}
\end{align}
When $X_\lambda>0$, define the strict constrained prior by,
\begin{align}
    \pi_\lambda(\dd x)
    \triangleq
    \frac{\ind\{L(x)>\lambda\}}{X_\lambda}\,p(\dd x).
    \label{eq:strict_constrained_prior}
\end{align}
We shall call a \textit{sample} from a constrained prior a pair $(x_i \sim \pi_\lambda, L_i\triangleq L(x_i))$, with \textit{sample index} $i$ the label of the sample.
% Note to self for future exploration:
% All shrinkage quantities in this paper are strict.
% Inclusive constraints may be useful as a constrained-sampling device for exploring a plateau, but they do not define a second shrinkage law in the evidence calculation.

Let $\lambda_0$ be a \textit{sentinel} likelihood associated with $X_{\lambda_0}=1$.
By convention, $\lambda_0$ is often set to $0$.
For increasing distinct likelihood levels,
\begin{align}
    \lambda_0 < \lambda_1 < \cdots < \lambda_G,\label{eq:strict_likelihooods}
\end{align}
write $X_g\triangleq X_{\lambda_g}$.
Nested sampling quadrature estimates the evidence as,
\begin{align}
    Z \approx \sum_{g=1}^{G} \lambda_g\,(X_{g-1}-X_g).
    \label{eq:block_quadrature}
\end{align}
The likelihood levels in Eq.~\ref{eq:strict_likelihooods} are an arbitrary \textit{schedule} of values, and the goal of NS is to assign each level $\lambda_g$ a random variable $X_g$ with known distribution.
We call the subscript $g$ the \textit{block index} associated with the level $\lambda_g$.
In addition, an auxiliary output of NS is a set of weighted samples drawn from the posterior.
In the absence of likelihood plateaus, we can uniquely associate each block index with a sample index, $g(i)$, and the unnormalised posterior sample weights are,
\begin{align}
    w_{g(i)} \triangleq \lambda_{g(i)}\, (X_{{g(i)}-1}-X_{g(i)}).\label{eq:post_weight_cont}
\end{align}
The posterior weights for samples that fall in likelihood plateaus will be shown later.

Let us introduce the concept of a race tree.
Let there be a set of $N$ samples, as well as a sentential sample with sample index $0$ given likelihood value $L_0 = \lambda_0$.
These samples form a race tree if each sample (excluding the sentinel sample) is an independent draw from some constrained prior whose contour is also contained in the set of samples,
\begin{align}
    \forall i \in& (1 \ldots N)\\
    \exists\, p(i) \in& (0 \ldots N) \label{eq:parent_def}\\
    x_i \sim& \,\pi_{\lambda_{p(i)}}.
\end{align}
In this case we say that $p(i)$ is the parent of $i$, denoting this as $p(i) \to i$, and that these samples form a race tree.
A \textit{lineage} is a maximal chain of parent links.
We shall refer to these samples as \textit{classic samples}, to differentiate them from phantom samples later.
The samples are non-decreasing,
\begin{align}
     \lambda_0 = L_0<L_1 \le \ldots \le L_N,
\end{align}
and we define a \textit{block} for a level $\lambda_g$ as the set of all sample indices with likelihood equal to $\lambda_g$,
\begin{align}
    \mathcal{B}_g &\triangleq \{i:L_i=\lambda_g\},
    \label{eq:block_def}\\
    m_g &\triangleq |\mathcal{B}_g|.
\end{align}
A block with $m_g > 1$ represents a plateau, where the ordering of samples in the block is not unique.

The out-degree of sample $i$ is,
\begin{align}
    d_i \triangleq \#\{j:p(j)=i\}.
\end{align}
The sentinel must have out-degree $d_0 > 0$ in order for the race tree to not be empty.

The crucial observation in NS is that a single sample generated from a parent $p(i) \to i$ shrinks the prior volume uniformly,
\begin{align}
    r_{p(i) \to i} \triangleq \frac{X_{\lambda_{i}}}{X_{\lambda_{p(i)}}} \sim&\, \mathcal{U}[0,1]\\
    \Longleftrightarrow\quad -\log r_{p(i) \to i} \sim&\, \mathrm{Exp}(1).
\end{align}
Define \textit{race time} $s \triangleq -\log X$ and \textit{race rank} $r\triangleq \exp -s$, and interpret the race tree as a set of lineages equipped with unit-rate clocks incrementally racing from one sample to the next sample.
For any sample-to-sample race $i-1 \to i$ there are $K_i$ active racing lineages.
We have that the race time for $i-1 \to i$ is,
\begin{align}
    s_{i-1 \to i} \sim& \mathrm{Exp}(K_i)\label{eq:race_increment}\\
    r_{i-1 \to i} =& e^{-s_{i-1 \to i}}.\label{eq:beta_shrinkage}
\end{align}
As is well-known from the order-statistic treatment of NS, $r_{i-1 \to i} \sim \mathrm{Beta}(K_i, 1)$.
Starting with $K_1=d_0$, after the race $i-1\to i$ is done the active lineage count updates as,
\begin{align}
    K_{i+1} = K_i - 1 + d_{i}.\label{eq:K_recursion}
\end{align}

An interesting aspect of this formulation is that there is no notion of live points, nor of a specific order that samples must be generated, beyond the race tree structure requirements.
A new child can be generated from any existing strict contour, and the only thing that must be tracked is the out-degree of its parent.
This opens up the potential for parallel processing novelties.

\begin{proposition}[Race-tree shrinkage]\label{prop:race_tree}
Suppose that there are $K_i$ active lineages participating in race $i-1 \to i$, each represented by a unit-rate clock.
Then $s_{i-1\to i}\sim\Exp(K_i)$, and afterwards the rate update is updated by Eq.~\ref{eq:K_recursion}.
\end{proposition}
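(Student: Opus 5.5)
The plan is to reduce the race to a minimum of independent exponential clocks, using the probability integral transform and the memoryless property of the exponential law. The first step is to express each lineage's clock in race time. Take an active lineage whose current endpoint is some sample $j$ with level $\lambda_j$, and whose next member $k$ satisfies $x_k\sim\pi_{\lambda_j}$. Assume for now that $L(x)$ has a continuous distribution under $p$, so there are no plateaus. Then $X_{L_k}/X_{\lambda_j}\sim\mathcal{U}[0,1]$, and the absolute race time $s_k=-\log X_{L_k}$ equals $s_j+E_k$ with $E_k\sim\Exp(1)$. The draws are independent by the race-tree definition, so conditional on the parents, the increments $E_k$ are independent unit-rate exponentials.

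The second step is to condition on the history up to race time $s_{i-1}=-\log X_{L_{i-1}}$. The $K_i$ active lineages are those whose next member has not yet been revealed, i.e.\ whose pending child has race time $>s_{i-1}$. Each such lineage started its clock at $s_j\le s_{i-1}$. Conditioning on the event $s_j+E_k>s_{i-1}$ and using memorylessness, the residual $s_k-s_{i-1}$ is again $\Exp(1)$. The residuals of the different lineages remain mutually independent, since the conditioning events factorise across lineages. The next sample in the global order is the minimum of the $K_i$ pending children. The minimum of $K_i$ independent $\Exp(1)$ variables is $\Exp(K_i)$, which gives $s_{i-1\to i}\sim\Exp(K_i)$, and equivalently $r_{i-1\to i}\sim\Beta(K_i,1)$.

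The third step is the count update in Eq.~\ref{eq:K_recursion}. The winning lineage's pending child becomes sample $i$, so that clock is consumed, giving $-1$. Sample $i$ then spawns $d_i$ children, each drawn from $\pi_{\lambda_i}$, which opens $d_i$ fresh unit-rate clocks starting at $s_i$, giving $+d_i$. All other lineages keep their residual clocks, which are again memoryless at $s_i$. Hence $K_{i+1}=K_i-1+d_i$. An induction from $K_1=d_0$, where the sentinel at $s=0$ opens $d_0$ clocks, completes the argument. By the strong Markov property, the joint law of the increments is a product of $\Exp(K_i)$ terms.

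The main obstacle is making the conditioning rigorous. The out-degrees $d_i$, and hence the tree shape, may be chosen adaptively, for instance by dynamic allocation. The independence claim must therefore hold conditionally on a filtration in which $d_i$ is decided using only information available at race time $s_i$. I would formalise this by treating the allocation as a stopping-time-adapted policy, so that memorylessness applies at each $s_i$. A second obstacle is plateaus, where $X_{L}$ has atoms and the uniform claim fails. Here I would restrict the proposition to the continuous case, or work with strict volumes and a randomised tie-breaking, deferring the exact plateau treatment as the paper indicates.
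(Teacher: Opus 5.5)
Your proposal is correct and takes the same route as the paper: the waiting time is the minimum of $K_i$ independent unit-rate exponential clocks, hence $\Exp(K_i)$, and retiring the winner's clock while starting $d_i$ new ones gives Eq.~\ref{eq:K_recursion}. The paper simply assumes the independent clocks as its hypothesis, whereas you derive them from uniform shrinkage and memorylessness and state the no-plateau and adapted-allocation conditions explicitly, which adds rigour without changing the argument; note, however, that your adapted-filtration condition does not cover JAXNS's dynamic allocation, which sets the out-degrees of early samples after later samples have been seen.
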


\begin{proof}
The minimum of independent exponential clocks is exponential with rate equal to the sum of the rates.
Once the arriving sample $i$ is observed, its lineage segment no longer contributes to the next segment of the race and its $d_i$ child lineage segments begin.
This changes the active lineage count by $-1+d_i$, giving Eq.~\ref{eq:K_recursion}.
\end{proof}

In classic races there are no ties, however in the presence of a plateau, $m_g>1$, we do have ties.
The correct way to handle this is to introduce a nuisance winner ordering parameter that decides the order of winners, and that the race is not over until all winners have been accounted for.
While the race is proceeding the race clocks are still ticking, hence each sample $k$ of block $\mathcal{B}_g$ gets a different race rank, $r^{(k)}_{g-1 \to g}$ falling within the atom $L = L_g$.
It can easily be shown that the final sample in the plateau has known rank,
\begin{align}
    r^{(m_g)}_{g-1 \to g} \sim\, \mathrm{Beta}(K_g - m_g + 1, m_g).\label{eq:end_block_dist}
\end{align}
However, since all plateau samples are strictly inside the atom, none of them correspond exactly to the end of the block, i.e. we have that $r^{(m_g)}_{g-1 \to g} \neq r_{g-1 \to g}$ almost surely.
This poses a problem because in order for shrinkage to proceed we need to know the race rank at the exact end of a block, not merely the race rank of the last sample in the block.
This is depicted in Figure~\ref{fig:censor}.

It will be helpful to define the following quantities,
\begin{align}
    p_{<g} \triangleq&\, \Prob(L_g > L > L_{g-1})\label{eq:open_category_defs_race}\\
    p_{=g} \triangleq&\, \Prob(L = L_g)\label{eq:atom}\\
    p_{>g} \triangleq&\, \Prob(L > L_g) \label{eq:strict_endpoint_pg} \\
    =&\, \frac{X_g}{X_{g-1}}\\
    =&\,r_{g-1 \to g},
\end{align}
where Eq.~\ref{eq:open_category_defs_race} defines the race interval, Eq.~\ref{eq:atom} defines a possible atom, and Eq.~\ref{eq:strict_endpoint_pg} is the end point of the race.
As shown in Figure~\ref{fig:censor} the presence of an atom censors us from learning the end point $p_{>g}$.
Thus, when there is a plateau we know that all the plateau samples \textit{finish} the race, just not \textit{when},
\begin{align}
    p_{>g} \le r^{(m_g)}_{g-1 \to g} \le p_{>g} + p_{=g}.
\end{align}
We call this loss of knowledge \textit{plateau censorship}.

\begin{figure*}[ht]
    \centering
    \IfFileExists{images/censor.drawio.pdf}{%
        \includegraphics[width=0.9\linewidth]{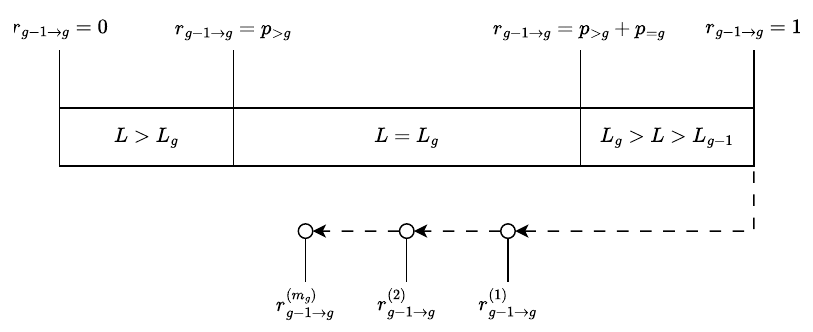}%
    }{%
        \placeholder{Plateau-censorship diagram: the source asset
        \texttt{images/censor.drawio.pdf} is not present in this checkout.}%
    }
    \caption{The presence of an atom $L=L_g$ causes us to to lose exact knowledge of $p_{>g}$, since $r^{(m_g)}_{g-1 \to g} \in [p_{>g}, p_{>g} + p_{=g}]$. The atom can be seen as an interval in race rank just before finishing the race where samples loose identity. None of the samples within the atom correspond exactly with the end of the block, whereas when $m_g=1$ it is exactly equivalent to the rank of the end of the block.}
    \label{fig:censor}
\end{figure*}

\section{Bayesian shrinkage model}
\label{sec:bayesian_shrinkage}

We propose a Bayesian model that will simultaneously resolve the problem of plateau censorship, and allow us to condition shrinkage on phantom samples.
Define the block probability vector,
\begin{align}
    p_g = (p_{>g}, p_{=g}, g_{<g}),
\end{align}
where clearly $|p_g| = 1$.
Consider a block with $K_g$ incoming active lineages and $m_g>1$ block samples.
Then, these define three observation categories,
\begin{itemize}
    \item $K_g - m_g$ samples lie past the atom,
    \item $m_g$ samples fall in the atom,
    \item $0$  samples fall in the race interval.
\end{itemize}
These are independent categorical observations, therefore the likelihood of $p_g$ is multinomial,
\begin{align}
    \mathcal{L}(p_g) \propto p_{>g}^{K_g - m_g} p_{=g}^{m_g}.
\end{align}
We set a Dirichlet prior,
\begin{align}
    p(p_g) = \mathrm{Dirichlet}(1, \epsilon_g, 1 - \epsilon_g),
\end{align}
which can be done be identifying the marginal of the Dirichlet, $\mathrm{Beta}(a_i, \sum_{j \neq i} a_j)$, with uniform shrinkage prior, $\mathcal{U}[0,1] \equiv \mathrm{Beta}(1, 1)$.
The variable $\epsilon_g$ is a free parameter and should represent our belief of $\E[p_{=g}/(1 - p_{>g})]$.
A fairly neutral choice would be $\epsilon_g=1/2$, given that $m_g > 1$.
Using conjugacy of the Dirichlet and multinomial distributions the posterior becomes exactly,
\begin{align}
    &p_g \mid K_g, m_g \sim\, \mathrm{Dirichlet}(a_{>g}, a_{=g}, a_{<g}),\label{eq:post_pg}\\
    &a_{>g}= K_g - m_g + 1,\\
    &a_{=g}= m_g + \epsilon_g,\\
    &a_{<g}= 1 - \epsilon_g
\end{align}
therefore the posterior marginals for $p_{>g}$ and $p_{=g}$ are,
\begin{align}
    p_{>g}  \mid K_g, m_g \sim&\, \mathrm{Beta}(K_g-m_g+1,m_g+1)\label{eq:posterior_pg}\\
    p_{=g}  \mid K_g, m_g \sim&\, \mathrm{Beta}(m_g+\epsilon_g,K_g-m_g+2-\epsilon_g)\label{eq:posterior_pe}
\end{align}
Comparing Eq.~\ref{eq:posterior_pg} to Eq.~\ref{eq:end_block_dist}, we see the difference, which proves censorship in a rigorous way.

Evidence quadrature, Eq.~\ref{eq:block_quadrature}, is be done using,
\begin{align}
    X_g = X_{g-1} p_{>g},
\end{align}
which can be used for form evidence expectations, or to produce samples of the evidence.
Now, coming back to assigning posterior mass to plateau samples, this this is done by equally assigning atom mass to each sample $k\in\mathcal{B}_g$,
\begin{align}
    w^{(k)}_{g} = \frac{\lambda_g}{m_g} X_{g-1} p_{=g},
\end{align}
whereas the posterior mass assigned to a non-plateau sample, mirroring Eq.~\ref{eq:post_weight_cont}, is,
\begin{align}
    w_g = \lambda_g X_{g-1}(1 - p_{>g}).
\end{align}
In general, the posterior sample weights do not sum to the evidence quadrature when plateaus are present because $p_{=g}\neq 1-p_{>g}$.
The difference is the open race-interval mass $p_{<g}$.

When $m_g=1$, clearly Eq.~\ref{eq:posterior_pg} does not match Eq.~\ref{eq:end_block_dist}.
This is reconciled by observing that a singleton block provides no evidence for an atom, so we must choose a two-class rather than a three-class prior.
Removing the equality category recovers the exact classic shrinkage,
\begin{align}
    p_{>g} \mid K_g \sim\, \mathrm{Beta}(K_g,1),
\end{align}
when $m_g=1$.

\section{Phantom-conditioned shrinkage}
\label{sec:phantoms}

We are now equipped to use information from phantom samples to improve shrinkage.
Let $\mathcal{P}_i$ be phantom samples associated with sample $i$, which we'll call a \textit{cluster} of phantom samples.
They are correlated draws from $\pi_{\lambda_{p(i)}}$ produced in the process of realising sample $i$.
Typically, this is because a Markov-chain approach is used to produce iid samples from $\pi_{\lambda_{p(i)}}$.
Since phantom samples are not independent they cannot participate in the race tree, however we are perfectly well able to use them to construct consistent unbiased MC estimates of expectations under $\pi_{\lambda_{p(i)}}$.
We thus seek a procedure for sampling from $p_g \mid K_g, m_g, \bigcup_i \mathcal{P}_i$.
As above, we first work under the assumption that a plateau is present, so $m_g>1$.
We then derive the $m_g=1$ case in the same manner, using the correct limiting two-class prior when there is no evidence for an atom.

We assume that phantom samples \textit{between} clusters are approximately independent, while allowing arbitrary correlation within each cluster. Drawing initial chain points from eligible stationary seeds does not by itself establish this independence: shared ancestry and finite constrained-chain exploration can induce dependencies. The resulting uncertainty therefore requires empirical calibration, which we assess in Section~\ref{sec:experiments}.
In fact, it's possible to redefine phantom clusters as any disjoint decomposition of phantom samples where each cluster is drawn stationarily from within the same parent, and samples are independent between clusters.
As will be shown below, it is best to make clusters as small as possible such that this is true.

Lets define, per cluster $c$, block interval counts,
\begin{align}
    A_{cg} =&\, \sum_{(x, L) \in \mathcal{P}_c} \mathbf{1}\{L(x) > L_{g-1}\} \mathbf{1}\{L_{c} \le L_{g-1}\}\\
    B_{cg} =&\, \sum_{(x, L) \in \mathcal{P}_c} \mathbf{1}\{L(x) > L_{g}\} \mathbf{1}\{L_{c} \le L_{g-1}\}\\
    E_{cg} =&\, \sum_{(x, L) \in \mathcal{P}_c} \mathbf{1}\{L(x) = L_{g}\} \mathbf{1}\{L_{c} \le L_{g-1}\}.
\end{align}
For each cluster, these are counts of samples falling in each interval in Figure~\ref{fig:censor}.
Then an MC estimate of $p_g$ is,
\begin{align}
    \hat{p}_{g} =& \frac{\sum_c \left( B_{cg}, E_{cg}, A_{cg} - B_{cg} - E_{cg}\right)^\top}{\sum_c A_{cg}}.
\end{align}
Now, if the phantom samples were independent then the observable $\sum_c A_{cg} \hat{p}_g$ would satisfy the multinomial likelihood.
In this case, the exact same machinery as in Section~\ref{sec:bayesian_shrinkage} could be used, and phantom-conditioned posterior shrinkage would be a Dirichlet distribution.
However, as they are not independent the posterior has no closed form.

One idea is to use a multinomial likelihood with an effective count calibrated for within-cluster correlation.
However, a scalar effective count wrongly accounts for correlations in joint shrinkage and is sensitive to how it is calibrated.
This approach was initially attempted and rejected for these reasons.

We instead propose an MC-based approach that asymptotes to the Dirichlet posterior in the limit of independent phantom samples.
To do this we will use the gamma construction of Dirichlet random variables to rewrite the race-induced posterior, Eq.~\ref{eq:post_pg},
\begin{align}
    M_{>g} \sim&\, \mathrm{Gamma}(a_{>g}, 1)\\
    M_{=g} \sim&\, \mathrm{Gamma}(a_{=g}, 1)\\
    M_{<g} \sim&\, \mathrm{Gamma}(a_{<g}, 1)\\
    \frac{(M_{>g}, M_{=g}, M_{<g})^\top}{M_{>g} + M_{=g} + M_{<g}} \sim&\, \mathrm{Dirichlet}(a_{>g}, a_{=g}, a_{<g}).\label{eq:gamma_dir}
\end{align}
Now consider how we would condition on iid phantom samples.
In this case, we are free to decompose each phantom cluster into singletons because doing so preserves inter-cluster independence.
Then, we would have that $A_{cg}, B_{cg}, E_{cg} \in \{0, 1\}$, that is each cluster would contribute at most one count observation to each block.
Define a cluster weight $v_c$ and then define,
\begin{align}
    M'_{>g} =& M_{>g} + \sum_c v_c B_{cg}\\
    M'_{=g} =& M_{=g} + \sum_c v_c E_{cg}\\
    M'_{<g} =& M_{<g} + \sum_c v_c (A_{cg} - B_{cg} - E_{cg})\\
    p_g =& \frac{(M'_{>g}, M'_{=g}, M'_{<g})^\top}{M'_{>g} + M'_{=g} + M'_{<g}}.
\end{align}
Using additivity of identical-scale gamma random variables, we can then identify the natural cluster weight as,
\begin{align}
    v_c \sim\, \mathrm{Gamma}(1, 1).
\end{align}
With this choice of cluster weight, singleton clusters exactly produce the Dirichlet posterior in Eq.~\ref{eq:post_pg}.

Now, consider what happens when cluster sizes are not singleton, that is we have a non-trivial irreducible correlation structure.
In this case, the gamma's do not add as the scales are not identical.
Under multiplication gamma variance scales quadratically, compared to linearly under addition, whereas the mean scales linearly under both multiplication and addition.
Thus, the net effect of correlated clusters is to inflate variance, while preserving the mean, which is precisely what we expect correlation to do.
Hence, if a single cluster can be split into two disjoint clusters where independence between the two clusters is approximately true, then it is favourable from the perspective of information utilisation to do this.

An important well-known consideration for conditioning on MC counts is that there should be enough independent counts to properly capture the variance of outcomes.
This why using sparsely populated histograms to draw conclusions is a bad idea.
We impose a minimum participating cluster count criterion in order for a block to condition on phantom count information.
Specifically, we use a Kish-like quantity to measure effective participating cluster count,
\begin{align}
    C^{\rm Kish}_g = \frac{\left(\sum_c A_{cg}\right)^2}{\sum_c A_{cg}^2},
    \label{eq:kish_cluster_count}
\end{align}
and then define a minimum threshold for condition, say $C_{\rm min} = 20$, and hence,
\begin{align}
    M'_{>g} =& M_{>g} + \mathbf{1}\{C^{\rm Kish}_g \ge C_{\rm min}\} \sum_c v_c B_{cg}\\
    M'_{=g} =& M_{=g} + \mathbf{1}\{C^{\rm Kish}_g \ge C_{\rm min}\} \sum_c v_c E_{cg}\\
    M'_{<g} =& M_{<g} + \mathbf{1}\{C^{\rm Kish}_g \ge C_{\rm min}\} \sum_c v_c (A_{cg} - B_{cg} - E_{cg})\\
    p_g =& \frac{(M'_{>g}, M'_{=g}, M'_{<g})^\top}{M'_{>g} + M'_{=g} + M'_{<g}}.
\end{align}
This provides a consistent generative shrinkage model that properly takes into account the race tree and phantom samples.
The shrinkage at block $g$ is the the component $p_{>g}$, while the plateau atom size is $p_{=g}$.
It is simple to calculate as the per-cluster counts are computed once, and then each draw from the joint posterior is just a set of draws from gamma distributions followed by reduction.

We now take the prior in the limit of no atom-mass evidence, which is appropriate when $m_g=1$ and there is no other evidence for an atom.
The equality category is then structurally absent,
\begin{align}
    M_{=g} = M'_{=g} = p_{=g} = 0,
\end{align}
and the race remains two-class.
The phantom-conditioned construction then becomes,
\begin{align}
    M_{>g} \sim&\, \mathrm{Gamma}(K_g, 1),\\
    M_{<g} \sim&\, \mathrm{Gamma}(1, 1),\\
    M'_{>g} =&\, M_{>g} + \mathbf{1}\{C^{\rm Kish}_g \ge C_{\rm min}\} \sum_c v_c B_{cg},\\
    M'_{<g} =&\, M_{<g} + \mathbf{1}\{C^{\rm Kish}_g \ge C_{\rm min}\} \sum_c v_c (A_{cg} - B_{cg}),\\
    (p_{>g},p_{<g})^\top =&\, \frac{(M'_{>g}, M'_{<g})^\top}{M'_{>g} + M'_{<g}}.
\end{align}
As a sanity check, we verify that in the limit of iid phantom samples this construction recovers classic shrinkage,
\begin{align}
    p_{>g} \mid \bigcup_c \mathcal{P}_c
    \sim \mathrm{Beta}\left(
        K_g + \sum_c B_{cg},
        1 + \sum_c (A_{cg} - B_{cg})
    \right).
\end{align}
Thus, with no participating phantom clusters or plateaus, it recovers exactly $p_{>g}\sim\mathrm{Beta}(K_g,1)$.

The above formulation assumes that the phantom samples in $\mathcal{P}_i$ are draws from $\pi_{\lambda_{p(i)}}$.
Thus, when they are produced by a Markov chain, the chain must be stationary enough that the retained phantom states have the correct marginal expectation under the parent contour.
The formulation cannot correct for non-stationary phantom samples.
We address this in Section~\ref{sec:v3}.

\section{JAXNS v3 implementation}\label{sec:v3}

The race tree permits new children to be generated from any existing strict contour. We exploit this by representing allocation work as logical \textit{lineage threads}, which is an idea initially proposed in \citep{2019S&C....29..891H}.
A thread is a single lineage from an initial parent, where the out-degree of each sample is one.
An edge $a \to b$ adds one active lineage to every existing block $g$ whose contour it crosses, \begin{align}
L_{a}<L_g\le L_b. \label{eq:edge_lineage_coverage}
\end{align}
Consequently, a completed thread beginning immediately before block $a$ and terminating once it reaches block $b$ adds one active lineage to every block $g\in\{a,\ldots,b\}$.
We'll denote such a thread by $T(a,b)$.
The final child may overshoot $L_b$, in which case the thread also provides additional lineage allocation beyond its requested terminal block.
A thread is logical in nature, and thus its successive edges may be generated at different scheduling times.
In particular there is no need for a scheduling barrier.
JAXNS uses this property for distributed processing by treating cluster workers as a pool to which sampling work is outsourced while scheduling and thread evolution overlap.

Algorithm~\ref{algo:jaxns} shows the core algorithm of JAXNS.
There is an outer loop which, until a goal condition is met, iteratively sets a lineage allocation target and calls an inner loop that schedules thread sampling according to that target.
The inner loop decomposes the target into threads and advances them until a depth condition is met.

The depth conditions available in JAXNS are,
\begin{align}
    \text{small remaining evidence}\quad
    \frac{L_G X_G}{\sum_g^G L_g(X_{g-1} - X_g) + L_G X_G}
    &< \tau_Z,
    \label{eq:small_remaining_evidence}
    \\
    \text{small remaining posterior mass}\quad
    \frac{L_gX_g}{\max_{g'\le g}(L_{g'}X_{g'})}
    &< \tau_{\rm post}.
\end{align}
The goal condition is user specified, and can be any condition computable from the samples, for example when a target effective sample size or evidence uncertainty is met.

\subsection{Allocation targets}
Given a lineage count $K_g$ for each block and an allocation target $D_g^k$ at allocation iteration $k$, we define the allocation gap,
\begin{align}
    G_g^k = (D_g^k - K_g)_+,
    \label{eq:allocation_gap}
\end{align}
which is equivalent to the number of lineage threads that must contain block $g$.
A schedule of sampling is done where $G_g^k$ is decomposed into a set of threads.
This decomposition can be done in any manner.
To use likelihood evaluations efficiently, JAXNS greedily decomposes $G_g^k$ into maximal threads.
This decomposition uses the fewest threads needed to fill the allocation gap.

JAXNS provides uniform, evidence-improving, and posterior-improving allocation targets.
At each goal iteration, uniform allocation assigns the same target lineage count to every contour, recovering the allocation of static nested sampling with a fixed number of live points.
The uniform allocation target $D_g^k=d_0+\Delta K k$ leads to the allocation gap
\begin{align}
    G_g^k &= (d_0+\Delta K k-K_g)_+,
    \label{eq:uniform_allocation_gap}
\end{align}
whereas the evidence- and posterior-improving allocation targets are based on unit-peak utilities derived in Appendix~\ref{sec:allocation_targets}.
Writing $\bar U_g$ for the relevant utility, they give the allocation gap
\begin{align}
    G_g^k &= \left\lceil\Delta K\bar U_g\right\rceil.
    \label{eq:utility_allocation_gap}
\end{align}
Here $\Delta K>0$ controls the amount of work introduced by one goal iteration.

\subsection{Thread scheduling}

At the start of a scheduling round, JAXNS computes the allocation gap on the existing likelihood blocks and decomposes it into maximal threads.
The requested thread intervals and allocation target remain fixed while the thread heads advance.
New threads and continuations are selected in order of increasing effective parent contour, with new threads taking precedence at equal contours.
Each thread finishes when its child reaches or exceeds its terminal contour.
The population of available seeds can grow during the round without changing the requested thread intervals or allocation target.
When all threads in the round have finished, the likelihood blocks and expected volume path are updated, and the depth condition is evaluated.
If the depth condition is not met, JAXNS recalculates the allocation gap on the updated likelihood blocks using the same target.
It then generates further samples wherever additional lineages are needed, without evaluating the goal condition.

\begin{algorithm}
\caption{The JAXNS core algorithm with dynamic lineage allocation, outer goal terminating loop, and inner depth termination loop.}
\begin{algorithmic}[1]
\Require root lineage count $d_0$, sentinel likelihood $\lambda_0$
% \ENSURE $\{(x_i, L_i, \mathcal{P}_i, d_i): i=1 \ldots N\}$

\State Sample $(x_i, L_i, \mathcal{P}_i) \sim \pi_{\lambda_0}$ for $i\in\{1 \ldots d_0\}$
\State Set $N=d_0$
\State Set allocation iteration $k=0$

\Repeat
    \Repeat
        \State Compute $G_g^k$ using Eq.~\ref{eq:uniform_allocation_gap} or Eq.~\ref{eq:utility_allocation_gap}.
        \State Decompose $G_g^k$ into a set of maximal threads, $\mathcal{T}$.
        \While{$\mathcal{T} \neq \varnothing$}
            \State Select a thread, $T \in \mathcal{T}$.
            \State Set $h \to \mathrm{head}(T)$.
            \State Set $i \to p(h)$.
            \State Set $N \to N + 1$
            \State Set $j \to N$
            \State Choose a stationary seed $x^0_j$ for $\pi_{L_i}$ from the available classic samples.
            \State Sample $(x_j, L_j, \mathcal{P}_j) \sim\, \pi_{L_i}$ from $x^0_j$.
            \State Set $d_i \to d_i + 1$
            \State Set $\mathrm{head}(T) \to j$.
            \If{$L_j > L_{\sup(T)}$}
            \State Remove $T$ from $\mathcal{T}$.
            \EndIf
        \EndWhile
    \Until{$\mathbf{depth}$}
    \State Set $k\to k+1$.
\Until{$\mathbf{goal}$}

\State \Return $\{(x_i, L_i, \mathcal{P}_i, d_i): i\in\{1 \ldots N\}\}$

\medskip
\Statex \textbf{Note 1:} If no stationary seed exists there for a requested contour, the thread is reparented to the closest shallower contour with a stationary seed, or to the sentinel if necessary.
\end{algorithmic}
\label{algo:jaxns}
\end{algorithm}

\subsection{Constrained-prior sampling}

\begin{algorithm}
\caption{The JAXNS constrained sampler using one-dimensional slice sampling with perfect bracketing and uniform trajectory sampling.}
\begin{algorithmic}[1]
\Require direction kernel $\mathcal{D}$, contour $\lambda$, seed point $x^0$, steps per acceptance $T$.
\State Set $i=0$.
\While{$i < T$}
    \State Choose direction, $\hat{n}\sim\,\mathcal{D}$.
    \State Construct a trajectory $I$ that passes through $x^i$ along $\hat{n}$, with end points outside the slice and independent from $x^i$.
    \State Uniformly sample $x^{i+1}\sim\,\pi_\lambda$ along $I$.
    \State Set $i \to i + 1$.
    \State Set $\mathbf{done}=L(x^i) > \lambda$.
\EndWhile
\State Set phantom cluster $\mathcal{P} = \{(x^k, L(x^k)) : k \in \{1 \ldots T-1\}\}$.
\State \Return $(x^T, L(x^T), \mathcal{P})$

\medskip
\end{algorithmic}
\label{algo:constrained_sampler}
\end{algorithm}

Nested sampling requires each classic child to be approximately sampled from the strict constrained prior of its parent.
The JAXNS constrained sampler uses one-dimensional slice sampling \citep{2000physics...9028N} from a seed point $x^0$, as shown in Algorithm~\ref{algo:constrained_sampler}.
The seed $x^0$ must itself be distributed according to the parent's strict constrained prior, so that the chain starts in stationarity.
Each Markov transition explores the constrained prior, with a total of $T$ steps per chain.
Because the seed is stationary and each transition preserves the constrained prior, no burn-in is required.
All generated chain states except the final classic sample form the phantom cluster.
A single step involves choosing a direction from a symmetric distribution $\hat{n} \sim\, \mathcal{D}$ independent of the current chain.
That is, all directions in the chain should come from the same distribution.
Given a direction $\hat{n}$ and chain point $x^i$ a trajectory that brackets $x^i$ along the direction is chosen.
The trajectory must have end points that land outside the contour, and the end points should be independent from $x^i$.
Then, constrained sampling is performed along this trajectory, which can be done in any manner.
A common approach is to randomly pick a point along the trajectory and greedily shrink the trajectory if the point falls outside the slice.

The default direction proposal distribution $\mathcal{D}$ is an isotropic Gaussian in homogeneous prior space, which is the unit hypercube in JAXNS.
This is the direction law used in all experiments in Section~\ref{sec:experiments}.
JAXNS also provides Gaussian-mixture-model (GMM) directions.
A Gaussian-mixture likelihood surrogate is fitted in homogeneous prior space using all classic samples collected so far and their stored likelihoods.
At direction-sampling time, an ellipsoid is selected in proportion to its fitted volume above the parent contour $\lambda$, and its covariance is then used to sample a slice direction.
This choice makes the direction distribution independent of the current chain point while maximising the probability of selecting a direction relevant to the constrained region.
Isotropic search directions are randomly injected to prevent collapse of the direction distribution.

JAXNS uses perfect bracketing along a straight line to construct the bracketing trajectory $I$ in Algorithm~\ref{algo:constrained_sampler}.
The endpoints are the intersections of the line with the bounds of the unit hypercube representing the homogeneous prior measure.
This is the maximal interval possible and can be determined without any likelihood evaluations.
Constrained sampling along this trajectory proceeds with uniform sampling and greedy trajectory shrinkage which exponentially contracts towards the current chain point, where the slice is necessarily satisfied.
This approach trades off the zero-cost trajectory determination with extra likelihood evaluations to uniformly sample the trajectory.
Mode evaporation remains possible with this approach.
The probability of tunnelling between modes depends on the geometry of the problem.
Mode-mass recovery and comparisons with other samplers are assessed in Section~\ref{sec:mode_death_comparison}.

\subsection{Seed selection}

The seed point must be chosen to maintain stationarity of the chain without requiring burn-in.
Specifically, if sampling from parent $j$ with likelihood $L_j$, i.e. sampling from $\pi_{L_j}$, then we should choose a seed as some sample $i_0$ already satisfying stationarity.
That is, choose $i_0$ such that $L_{p(i_0)} \le L_j < L_{i_0}$.

JAXNS selects constrained-chain seeds exclusively from the eligible classic samples available at the time of selection, and choices at different contours may reuse the same sample.
Phantom samples are not included in the eligible seed population because their reuse may increase inter-cluster correlation, as discussed in Section~\ref{sec:phantom_seed_dependence}.

Efficient seed selection must make newly generated classic children available without reconsidering the entire growing sample collection after every group of chains.
At the start of a scheduling round, every classic sample collected so far is available, subject to the eligibility condition above.
Between full updates of this population, JAXNS makes a bounded random subset of newly generated classic children available, with membership chosen independently of their likelihoods.
This allows recent children to seed later chains while limiting the work needed to incorporate new seeds.
All accumulated classic children are made available together as the collection grows and before each new scheduling round.

A selected parent may have no available stationary seed in the seed population.
In this case, the thread is reparented to the nearest shallower contour with eligible seeds, falling back to the sentinel node if necessary, and an eligible seed is selected at random.

\subsection{Massively scalable distributed processing}

JAXNS is designed for heterogeneous distributed likelihood evaluation scaling from a laptop to clusters with thousands of nodes.
A central process owns the scientific state and thread schedule, while workers perform likelihood evaluations and constrained-prior sampling.
The only network topology requirement is that workers can initiate connections to the coordinator's exposed port, so connection authorisation is needed only in that direction.
Workers receive serialised likelihood models, arguments, parameters, and sampler configuration.
Communication uses ZMQ and serialisation uses Python's \texttt{pickle} functionality combined with JAX's Pytree representation.
Model arguments may consist of arbitrarily nested trees of literals and arrays.

The thread schedule exposes the heads of incomplete logical threads as independent sampling work from potentially different parent contours.
The coordinator assigns these heads across all available worker execution lanes and replenishes capacity as completed work is incorporated, provided that enough incomplete threads are available.
This allows constrained-prior sampling calls to overlap without requiring every thread to reach the same contour or terminate at a scheduling barrier.
The race-tree formulation makes this separation of logical work from physical workers straightforward because a completed child only increments the out-degree of its known parent and advances its logical thread.

The worker pool is elastic throughout a run.
Workers may register after sampling has begun, including after a node has been restarted, and immediately contribute their available capacity.
Heartbeats identify workers that have become unavailable and fence their expired leases.
Their unfinished tasks are returned to the queue, while the corresponding node supervisor is asked to restart the failed worker.
If the entire pool becomes unavailable, the scientific process retains its state and waits for a worker to recover or join rather than terminating the run.
Consequently, workers may join or leave at any time without making cluster membership part of the scientific algorithm.
Checkpointing and resumability also allow users to inspect intermediate results, change goal targets, and steer the sampling process interactively.

\section{Results}\label{sec:experiments}

We now provide a set of experiments aimed at validating phantom conditioning, as well as determining its limitations.
In addition, we characterise the evidence-improving and posterior-improving allocation schemes.
We provide numbered claims for each finding.
We introduce three problem models with reference evidence estimates that are used to substantiate each claim.
For each substantiation we run the model with $30$ different random seeds.

\subsection{Known-evidence problems}

The three baseline problems have dimension $D=10$ and vary contour curvature and multimodality while retaining deterministic reference evidences.
Below $\phi_D(x\mid\mu,\Sigma)$ denotes a $D$-variate normal density.
Figure~\ref{fig:evidence_problems} shows their two-dimensional analogues.

\paragraph{Correlated Gaussian (G10).}
This problem places a shifted, strongly correlated Gaussian likelihood in the tail of an isotropic Gaussian prior,
\begin{align}
    L_{\rm G10}(x)=\phi_{10}(x\mid\mu_{\rm G},\Sigma_\ell),\qquad x\sim\mathcal{N}(0,I).
    \label{eq:g10_likelihood}
\end{align}
Here $\mu_{\rm G}=2e_1-3e_2=(2,-3,0,\ldots,0)$, $(\Sigma_\ell)_{ii}=1$, and $(\Sigma_\ell)_{ij}=0.99$ for $i\ne j$.
Gaussian conjugacy gives
\begin{align}
    \log Z_{\rm G10}=\log\phi_{10}(\mu_{\rm G}\mid0,I+\Sigma_\ell)
    = -16.81972329. \label{eq:g10_evidence}
\end{align}

\paragraph{Curved Gaussian (CG10).}
This problem retains the standard normal prior of G10 and bends its likelihood with a volume-preserving shear,
\begin{align}
    L_{\rm CG10}(x)&=\phi_{10}\!\left(T_{0.4}^{-1}(x)\mid\mu_{\rm G},\Sigma_\ell\right),
    \label{eq:cg10_likelihood}\\
    T_\beta(z)&=\left(z_1,z_2+\beta\left[(z_1-2)^2-1\right],z_3,\ldots,z_{10}\right).
\end{align}
The shear is centered on the component's latent first-coordinate mean and variance.
It has unit Jacobian and zero mean displacement of the second coordinate under the latent component.
Conditioning $z_{2:10}$ on $z_1$ reduces the reference evidence to one-dimensional quadrature,
\begin{align}
    \log Z_{\rm CG10}
    =\log\E_{z\sim\mathcal{N}(\mu_{\rm G},\Sigma_\ell)}
       \!\left[\phi_{10}(T_{0.4}(z)\mid0,I)\right]
    = -17.89275623. \label{eq:cg10_evidence}
\end{align}

\paragraph{Spike--slab mixture (SS10).}\label{sec:ss10}
We use the spike--slab model of \citet{Albert2025SamplingComparison}, with
\begin{align}
    x&\sim\mathrm{Uniform}\!\left([-4,8]^{10}\right),\notag\\
    L_{\rm SS10}(x)&=\phi_{10}(x\mid\mu_1,0.08I)
                   +\phi_{10}(x\mid\mu_2,0.8I),\label{eq:ss10_likelihood}\\
    \mu_1&=(6,6,0,\ldots,0),\qquad\mu_2=(2.5,2.5,0,\ldots,0).
\end{align}
At equal component Mahalanobis radius, the spike-to-slab volume ratio is $(0.08/0.8)^{10/2}=10^{-5}$.
Writing $s_1=\sqrt{0.08}$, $s_2=\sqrt{0.8}$ and $\Phi$ for the standard normal cumulative distribution function gives the finite-prior reference
\begin{align}
    Z_{\rm SS10}=12^{-10}\sum_{k=1}^{2}\prod_{j=1}^{10}
    \left[\Phi\!\left(\frac{8-\mu_{kj}}{s_k}\right)
         -\Phi\!\left(\frac{-4-\mu_{kj}}{s_k}\right)\right].\label{eq:ss10_evidence}
\end{align}
Thus $\log Z_{\rm SS10}=-24.15593481$, and the reference spike fraction is $0.5000077444$.
The slab-only log evidence is $-24.84909748$.

\begin{figure*}[t]
    \centering
    \includegraphics[width=\textwidth]{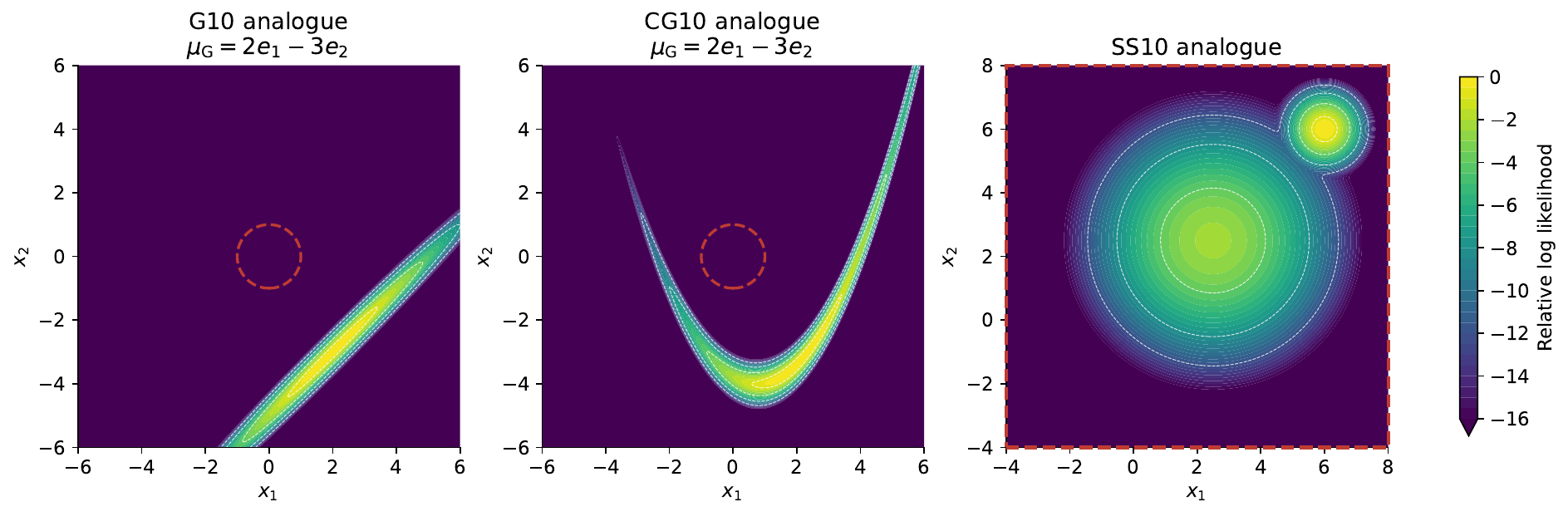}
    \caption{Two-dimensional analogues of G10, CG10, and SS10. Filled contours show relative log likelihood. White lines mark selected levels. Red dashed boundaries show the one-$\sigma$ Gaussian prior circle or the support of the uniform prior. These two-dimensional densities are analogues, not posterior marginals of the ten-dimensional problems. The SS10 geometric volume contrast is much larger in ten dimensions than shown here.}
    \label{fig:evidence_problems}
\end{figure*}

\subsection{Method of assessment}\label{sec:evidence_protocol}

Each experiment uses $R=30$ independent random seeds.
Unless stated otherwise, we use evidence-improving allocation with $\Delta K=30D$ and $d_0=30D$ initial root lineages, where $D$ is the dimension.
The depth condition is the small remaining-evidence condition in Eq.~\eqref{eq:small_remaining_evidence}, with $\tau_Z=\log(1+10^{-3})$.
The goal condition is a classic log-evidence uncertainty below $0.05$, so the run stops at the first completed goal satisfying
\begin{align}
    \widehat\sigma^{\rm exp}_{\log Z}<0.05.\label{eq:benchmark_goal}
\end{align}

For run $r$, let $\widehat\ell_r$ be the mean of its log-evidence shrinkage distribution and $\ell_{\rm ref}=\log Z_{\rm ref}$ the reference log evidence.
We measure accuracy using the root mean squared error,
\begin{align}
    \mathrm{RMSE}=\left[\frac{1}{R}\sum_{r=1}^{R}
    (\widehat\ell_r-\ell_{\rm ref})^2\right]^{1/2}.
\end{align}
The $95\%$ coverage metric is the fraction of runs whose central $95\%$ shrinkage interval contains $\ell_{\rm ref}$.
Each interval is bounded by the $2.5$th and $97.5$th percentiles of that run's log-evidence shrinkage distribution.
A coverage of $95\%$ is ideal, indicating that the intervals contain the reference value at their nominal rate.

\subsection{Evidence accuracy depends on the resolved structure}

\begin{empiricalclaim}\label{claim:resolved_tree}
Conditioning on phantom samples improves evidence accuracy when the classic race tree resolves sufficient likelihood structure.
\end{empiricalclaim}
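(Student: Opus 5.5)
This claim is empirical, so the plan is to substantiate it with a paired experimental design supported by a short mechanistic argument, rather than to give a deductive proof.

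\textbf{Mechanism.} The argument starts from the construction in Section~\ref{sec:phantoms}. In the limit of independent phantom samples, the phantom-conditioned shrinkage for a singleton block is
$\mathrm{Beta}(K_g+\sum_c B_{cg},\,1+\sum_c(A_{cg}-B_{cg}))$.
This has the same mean ratio $X_g/X_{g-1}$ as classic shrinkage. Its variance, however, is reduced by a factor of roughly $K_g/(K_g+\sum_c A_{cg})$. When clusters are correlated, the gamma-weight construction inflates that variance but keeps the mean unbiased. So as long as the Kish count $C^{\rm Kish}_g\ge C_{\rm min}$, per-block shrinkage variance should fall.

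Evidence error in the well-resolved regime is dominated by the accumulated variance of $\sum_g \log p_{>g}$. Per-block variance reduction should therefore translate directly into lower log-evidence RMSE. The benefit should grow with dimension, because $\sum_c A_{cg}/K_g$ scales with the number $T\propto D$ of slice steps per chain.

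\textbf{Experimental design.}
\begin{itemize}
\item Run each problem of Section~\ref{sec:experiments} with $R=30$ seeds under the protocol of Section~\ref{sec:evidence_protocol}.
\item For each run, compute both the classic and the phantom-conditioned log-evidence from the \emph{same} race tree. Since only the shrinkage model differs, the comparison is paired and sampling noise common to both estimates cancels.
\item Report the RMSE of each estimator.
\item Form a paired bootstrap interval on the RMSE ratio by resampling seeds with replacement and recomputing both RMSEs. This tests whether the ratio is significantly below one.
\end{itemize}

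\textbf{What counts as ``sufficient resolved structure.''} G10 and CG10 should qualify, since the classic tree traces a single connected contour family. SS10 at the default allocation may not, since the spike can be missed (mode evaporation). In that case phantom samples only refine a volume path that is already structurally wrong, and the error is a bias that conditioning cannot remove.

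To make the claim precise, I would sort runs by an explicit diagnostic. One option is whether the recovered spike fraction is close to the reference value $0.5$. Another is coverage of the classic interval. The claim is then stated conditionally: improvement appears on the resolved runs and on G10/CG10. To show the effect grows with dimension, I would add a dimension sweep of G10 and check that the RMSE ratio decreases with $D$.

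\textbf{Main obstacle.} The hardest part is separating the intended gain from two artefacts.
\begin{itemize}
\item \emph{Cluster correlation.} Inter-cluster correlation from shared seeds could make the phantom estimator overconfident. That would produce apparent precision without any accuracy gain.
\item \emph{Bias from non-stationary chains.} Phantom states from chains that are not fully stationary could bias the shrinkage.
\end{itemize}
To address this, I would judge accuracy only through RMSE against the reference evidence, never through the method's own reported uncertainty, and report coverage separately. I would also run an ablation that varies the number of retained phantoms per cluster, for example keeping only the last $k$ states. The RMSE should change monotonically with $k$ if the improvement genuinely comes from the phantom counts.
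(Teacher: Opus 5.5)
Your paired design (the same race tree scored with classic and with phantom-conditioned shrinkage, $R=30$ seeds, a whole-seed paired bootstrap, a phantom-count ablation and a dimension sweep) is essentially how the paper substantiates the claim on G10. There, full conditioning lowers RMSE from $0.05130$ to $0.03500$, with a paired $\Delta$RMSE interval of $[-0.02805,-0.00509]$.

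\textbf{The gap is your operational definition of ``sufficient resolved structure.''} You place CG10 in the resolved class because its contours form a single connected family, and you state the conditional claim as improvement ``on G10/CG10.'' In the paper CG10 is the counter-example:
\begin{itemize}
\item classic RMSE is $0.07971$, while the mean reported shrinkage SD is $0.04965$ and coverage is $66.7\%$;
\item full phantom conditioning gives $0.08121$, with a paired interval of $[-0.01383,+0.01537]$ that straddles zero;
\item the $1D$ prefix is significantly worse than classic.
\end{itemize}
Run as proposed, your experiment would therefore refute the claim as you have scoped it. Unimodality, or recovery of the spike in SS10, is not the right criterion. Bent contours can be under-explored by the slice chains even with a single mode, and since the phantoms come from those same chains they inherit that error rather than correcting it.

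Your own mechanism points to the right diagnostic. Phantom conditioning can only reduce the part of the error that the classic shrinkage model already accounts for. So a cohort counts as resolved when the classic estimator is accurate and calibrated: classic RMSE close to the mean reported SD, with near-nominal coverage. G10 meets this ($0.0513$ against $0.0493$, coverage $93\%$). It is unresolved when RMSE clearly exceeds the reported SD, as in CG10 and in SS10.

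Apply this test at the cohort level, as the paper does. Do not sort individual runs by whether their classic interval covers the reference. That selects on the classic error itself, and because the phantom error is only partly correlated with it, it biases the paired comparison in favour of the classic estimator.

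A minor point on the ablation: do not require RMSE to be monotone from zero retained phantoms. Even on G10 the $1D$ prefix has higher point RMSE than classic ($0.05799$), and the claim concerns full retention.
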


We use G10 as the initial test of whether phantom conditioning reduces evidence error.
We choose a unimodal yet highly correlated problem, controlling for mode discovery while retaining a non-trivial sampling challenge.
Table~\ref{tab:phantom_evidence_g10_0p05} and Figure~\ref{fig:g10_prefix_accuracy} show that the classic result achieves RMSE $0.05130$, approximately the $0.05$ target, and nominal $95\%$ interval coverage of $93\%$ ($28/30$ runs).
This indicates that the classic method succeeds in both evidence accuracy and uncertainty calibration.
Conditioning on phantom samples then significantly reduces RMSE, with the largest improvement when using all $sD-1$ phantoms.
With $s=10$, this includes all $99$ intermediate states and reduces RMSE to $0.03500$, a drop of $31.8\%$.

\begin{table*}[t]
\centering
\scriptsize
\setlength{\tabcolsep}{4pt}
\caption{G10, $\mu_{\rm G}=2e_1-3e_2$, classic seeds and evidence-improving allocation: $30$ trees, $2048$ paired shrinkage draws per tree, including all $99=sD-1$ phantoms. RMSE SE and pointwise $95\%$ intervals for RMSE minus classic RMSE use $100{,}000$ paired whole-seed bootstrap resamples. Reported SD is mean shrinkage uncertainty. Likelihood calls are $(74.848\pm1.255)\times10^6$ (mean $\pm$ SD), identical for every prefix.}
\label{tab:phantom_evidence_g10_0p05}
\begin{tabular}{lrrrl}
\toprule
Prefix & RMSE $\pm$ SE & Reported SD & 95\% coverage & $\Delta$RMSE 95\% interval \\
\midrule
Classic & 0.05130 $\pm$ 0.00614 & 0.04933 & 93.3\% & --- \\
$1D$ & 0.05799 $\pm$ 0.00782 & 0.02996 & 73.3\% & $[-0.00756,+0.01976]$ \\
$2D$ & 0.05150 $\pm$ 0.00550 & 0.02724 & 63.3\% & $[-0.01196,+0.01203]$ \\
$3D$ & 0.04721 $\pm$ 0.00482 & 0.02532 & 63.3\% & $[-0.01651,+0.00791]$ \\
$4D$ & 0.04279 $\pm$ 0.00476 & 0.02389 & 66.7\% & $[-0.02139,+0.00371]$ \\
$5D$ & 0.04065 $\pm$ 0.00474 & 0.02271 & 70.0\% & $[-0.02377,+0.00156]$ \\
$6D$ & 0.03800 $\pm$ 0.00450 & 0.02175 & 73.3\% & $[-0.02638,-0.00111]$ \\
$7D$ & 0.03767 $\pm$ 0.00422 & 0.02091 & 76.7\% & $[-0.02565,-0.00225]$ \\
$8D$ & 0.03681 $\pm$ 0.00409 & 0.02018 & 76.7\% & $[-0.02648,-0.00320]$ \\
$9D$ & 0.03617 $\pm$ 0.00391 & 0.01952 & 73.3\% & $[-0.02699,-0.00382]$ \\
All ($sD-1$) & 0.03500 $\pm$ 0.00376 & 0.01898 & 70.0\% & $[-0.02805,-0.00509]$ \\
\bottomrule
\end{tabular}
\end{table*}

\begin{figure*}[t]
    \centering
    \includegraphics[width=\textwidth]{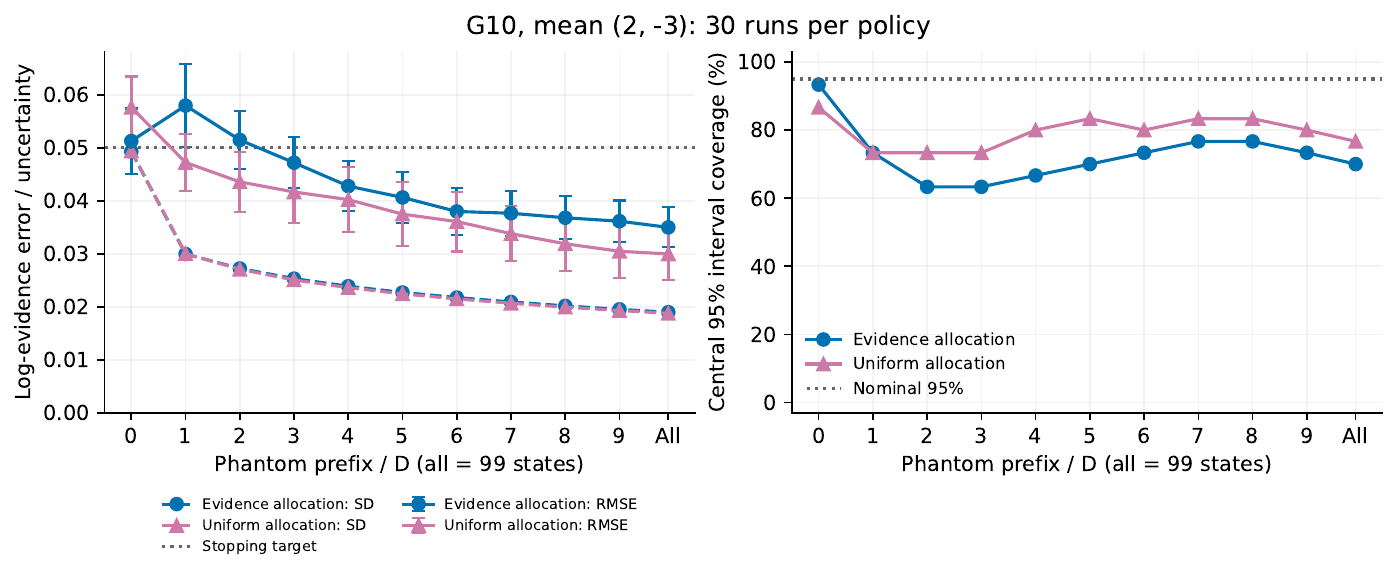}
    \caption{G10 at $\mu_{\rm G}=2e_1-3e_2$: evidence accuracy and calibration for evidence-improving and uniform allocation, with $30$ classic-seed trees per policy. Left: RMSE with bootstrap SE and mean reported shrinkage SD, including all $99=sD-1$ phantoms. Full conditioning reduces RMSE within both policies. The between-policy accuracy differences are unresolved. Right: central $95\%$ interval coverage. Evidence-improving allocation uses $41.4\%$ fewer likelihood evaluations.}
    \label{fig:g10_prefix_accuracy}
\end{figure*}

\begin{empiricalclaim}\label{claim:largest_prefix}
Using all $sD-1$ phantom samples gives the best evidence accuracy within the precision of the experiments.
\end{empiricalclaim}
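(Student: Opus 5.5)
Claim~\ref{claim:largest_prefix} is empirical, so I would substantiate it with the same paired protocol as Claim~\ref{claim:resolved_tree}, not with a derivation. I would reuse the $R=30$ classic-seed race trees already generated for G10 (Table~\ref{tab:phantom_evidence_g10_0p05}). I would then recompute the phantom-conditioned shrinkage for nested prefixes of each phantom cluster: the first $1D,2D,\dots,9D$ chain states, and finally all $sD-1$. Because the likelihood calls are identical across prefixes, any accuracy difference is attributable only to how much phantom information enters the gamma construction of Section~\ref{sec:phantoms}. This keeps the comparison cleanly paired. For each prefix I would compute RMSE against $\ell_{\rm ref}$ and a paired whole-seed bootstrap for the difference between that prefix's RMSE and the all-phantom RMSE.

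The theoretical expectation guiding the test is as follows. Adding more stationary phantoms from the same cluster adds terms $v_c B_{cg}$, $v_c E_{cg}$, $v_c(A_{cg}-B_{cg}-E_{cg})$ whose means track $\hat p_g$ more precisely. Within-cluster correlation inflates variance but does not bias the mean. Hence the posterior mean of $p_{>g}$ should move monotonically (in expectation) toward the Monte Carlo estimate, and RMSE should be non-increasing in prefix length. The table already shows this trend from $2D$ onward. The first step is therefore to confirm monotonicity of RMSE across prefixes and locate the minimum at the full prefix.

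The second step is the statistical statement ``within the precision of the experiments.'' I would show two things. First, the all-phantom RMSE is the smallest point estimate. Second, the bootstrap intervals for RMSE(prefix) minus RMSE(all) either exclude zero on the positive side or overlap zero, but never exclude it on the negative side, so no shorter prefix is significantly better. To guard against the result being specific to G10, I would repeat the sweep on CG10, and on SS10 in the resolved-structure regime.

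The main obstacle is precision. Adjacent prefixes ($8D$, $9D$, all) differ by only about $0.001$ in RMSE, while the SE is about $0.004$, so the claim can only be ``best within precision'' rather than strictly best. If the bootstrap cannot separate them, I would fall back to two supporting arguments. One is a trend test: regress RMSE on prefix length using the paired bootstrap. The other is the theoretical monotonicity argument above. Together these argue that truncating phantoms discards information without any compensating accuracy gain.
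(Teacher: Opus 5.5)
Your core step is the paper's own substantiation. The paper simply observes in Table~\ref{tab:phantom_evidence_g10_0p05} that the G10 RMSE is lowest at the full prefix ($0.03500$). Your paired bootstrap of RMSE(prefix) minus RMSE(all) is a more direct test of ``best within precision'' than anything the paper reports, because its $\Delta$RMSE intervals are all taken relative to classic.

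Two of your supporting steps would fail, however. The first is the monotonicity argument, which is not valid and cannot serve as a fallback. The gamma construction of Section~\ref{sec:phantoms} ``preserves the mean'' under within-cluster correlation. That only means correlated phantoms enter the posterior mean at their raw count rather than at their effective information; it says nothing about the estimator's error. A short, strongly autocorrelated prefix can therefore pull the posterior mean of $p_{>g}$ far more than it is worth, especially since its first states are tied to a classic seed that may be reused across contours. So RMSE can rise before it falls. The paper's data show exactly this at the classic$\to1D$ step, which your theory would need to cover: $0.05130\to0.05799$ in G10, and in CG10 a resolved increase $0.07971\to0.10276$ with $\Delta$RMSE interval $[+0.00469,+0.03931]$. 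Note also that the G10 sweep is already monotone from $1D$; the only break is that first step.

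The second failure is the proposed repeat on CG10 and SS10, which would not confirm your criterion that the full prefix is the smallest point estimate. In CG10, classic beats full ($0.07971<0.08121$). In SS10 the sweep is non-monotone, and the $1D$, $3D$ and $5D$--$8D$ prefixes all beat full (e.g.\ $0.73564$ at $3D$ versus $0.73695$). Moreover, SS10 has no resolved-structure regime in these experiments; it is the paper's missed-mode example. The claim therefore has to be scoped to the resolved-structure case, as the paper does by attaching it to Claim~\ref{claim:resolved_tree}. It should be read only as ``the full prefix attains the lowest RMSE and no shorter prefix is resolvably better.'' Your first two checks on G10 already establish that. No upgrade to strict optimality is needed, and the theoretical fallback should be dropped rather than relied on.
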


This is seen directly in the prefix comparison presented in Claim~\ref{claim:resolved_tree}, where the lowest RMSE occurs at $sD-1$.

\begin{empiricalclaim}\label{claim:coverage}
Phantom conditioning produces overconfident evidence uncertainties, with coverage below the nominal $95\%$.
\end{empiricalclaim}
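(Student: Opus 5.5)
The claim is empirical, so I will support it with two things: direct evidence from the G10 benchmark, and a mechanistic explanation based on the gamma construction of Section~\ref{sec:phantoms}.

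\textbf{Direct evidence.} I will read it off Table~\ref{tab:phantom_evidence_g10_0p05}.
\begin{itemize}
\item The classic shrinkage reports SD $0.04933$ against RMSE $0.05130$, with $93.3\%$ coverage. This fixes a well-calibrated baseline.
\item Every phantom prefix reports an SD well below its realised RMSE. With all $sD-1$ phantoms the reported SD is $0.01898$ against RMSE $0.03500$, a ratio near $0.54$.
\item Coverage correspondingly falls to $63$--$77\%$.
\end{itemize}
To state this quantitatively, I will compute the ratio RMSE$/\overline{\mathrm{SD}}$ per prefix. I will attach a binomial interval to each coverage fraction to show that $21/30$ is incompatible with a nominal $0.95$: the one-sided binomial tail is far below $1\%$. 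I will then repeat the check on CG10 and SS10 under the same protocol (Section~\ref{sec:evidence_protocol}), and on uniform allocation (Figure~\ref{fig:g10_prefix_accuracy}), so the conclusion does not rest on one problem or one allocation policy.

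\textbf{Mechanism.} The phantom-conditioned posterior is exact Dirichlet only when clusters are singletons and independent. Within a cluster, one $v_c\sim\mathrm{Gamma}(1,1)$ multiplies all of that cluster's counts. This inflates variance relative to the iid Dirichlet, but only to the extent captured by within-cluster count multiplicity. The construction assumes independence \emph{between} clusters. In practice, clusters seeded from shared classic ancestors share ancestry and explore the constraint only over finitely many steps, as noted in Section~\ref{sec:phantoms}. Counts in different clusters are therefore positively correlated, and the true variance of $\sum_c v_c B_{cg}$ exceeds the modelled variance. There is a second source of error: adjacent blocks reuse the same clusters. The model draws fresh $v_c$ per block, but the realised count errors are correlated across $g$. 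Their effect on $\log Z$ accumulates coherently, whereas the model treats it as partly averaging out. Both effects shrink the reported spread without a matching reduction in error, which predicts undercoverage.

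\textbf{Hardest step.} The main obstacle is separating this variance understatement from bias, for example from imperfect stationarity of chains. I would first check the mean error $\overline{\widehat\ell_r-\ell_{\rm ref}}$ against its standard error. If it is near zero, the undercoverage is attributable to width. As a confirming ablation, I would thin each cluster to one state, or re-seed chains from independent classic samples. Coverage should then move back toward $95\%$ as the correlation between clusters is removed.
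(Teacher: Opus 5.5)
Your direct-evidence argument is the paper's own: on G10 the reported SD falls from $0.04933$ to $0.01898$ while RMSE only falls from $0.05130$ to $0.03500$, and coverage drops from $93.3\%$ to $70\%$. Your binomial check and extra problems are sound additions, but on CG10 and SS10 classic coverage is already far below nominal, so only G10 isolates phantom conditioning as the cause; the mechanism section is not needed for the claim, and its second effect assumes $v_c$ is redrawn per block, which the paper never states (it indexes $v_c$ by cluster only), so present that as a hypothesis.
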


Figure~\ref{fig:g10_prefix_accuracy} shows that the reported uncertainties decrease faster than the RMSE as more phantoms are included.
At the same time, coverage falls from near the ideal $95\%$ to $70\%$ when all phantoms are used.
Thus phantom conditioning improves the evidence estimates while making their reported uncertainties overconfident.

\begin{empiricalclaim}\label{claim:unresolved_structure}
When likelihood structure remains unresolved, conditioning on all phantom samples neither improves nor worsens evidence accuracy within experimental precision.
\end{empiricalclaim}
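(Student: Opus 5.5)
The claim is empirical, so it will be substantiated by a designed experiment, not a deduction. The plan is to build a problem where the classic race tree demonstrably fails to resolve part of the likelihood structure, and then compare classic and fully phantom-conditioned evidence on the same runs. SS10 is the natural choice. Its spike occupies a prior volume about $10^{-5}$ times that of the slab at equal Mahalanobis radius, yet it carries half the evidence. With $d_0=30D$ root lineages, many trees will either miss the spike entirely or under-resolve it, converging onto the slab-only evidence $-24.84909748$ rather than $\log Z_{\rm SS10}=-24.15593481$. I will run $R=30$ seeds under the protocol of Section~\ref{sec:evidence_protocol}, using evidence-improving allocation, the $\tau_Z$ depth condition and the $0.05$ goal. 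For each tree I will draw $2048$ paired shrinkage samples from both the classic model and the all-$(sD-1)$ phantom model of Section~\ref{sec:phantoms}. Because the classic samples, likelihood calls and tree are shared, the comparison is exactly paired.

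First I will verify that the structure is unresolved, rather than assume it. For each run I will record the recovered spike fraction, i.e.\ the posterior weight near $\mu_1$, and compare it with the reference $0.5000077444$. I will also check whether $\widehat\ell_r$ clusters near the slab-only value. This diagnostic defines the "unresolved" condition in the claim. If some seeds do resolve the spike, I will stratify by resolution status, so that resolved seeds reproduce Claim~\ref{claim:resolved_tree} and cannot contaminate the null result. Second, I will compute the RMSE of each method and the paired difference $\Delta\mathrm{RMSE}$, with a $95\%$ interval from $100{,}000$ whole-seed bootstrap resamples, as in Table~\ref{tab:phantom_evidence_g10_0p05}. The claim is supported if this interval contains zero and is narrow relative to the bias scale of roughly $0.69$ nats. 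As a secondary check I will repeat the analysis on a second unresolved configuration, for example CG10 with a tighter budget or a further-shifted G10 mean, so the conclusion does not rest on one geometry.

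The mechanism I expect, and will check directly, is the following. Phantom clusters are correlated draws from the same constrained prior the chain explores. If the chain never tunnels into the spike, the phantoms see the same truncated geometry as the classic samples. Conditioning then shrinks variance around an already biased volume path, leaving the dominant structural error unchanged in either direction. To confirm this I will compare the per-block $\hat p_{>g}$ from phantoms with the classic expectation within the slab-dominated region, expecting agreement.

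The main obstacle is statistical power. The error is dominated by seed-level mode discovery, so the between-seed variance of $\widehat\ell_r$ is large, and a null $\Delta\mathrm{RMSE}$ could simply reflect a wide interval rather than genuine equivalence. The pairing is what makes the test informative: the mode-discovery variance is common to both methods, so it largely cancels in the paired bootstrap. If the interval is still too wide, my first remedy is to increase the number of seeds among the unresolved stratum. I will then report the interval explicitly as an equivalence bound, so the claim is stated as "within experimental precision" rather than as an exact zero effect.
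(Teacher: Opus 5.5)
Your design is essentially the paper's: a paired comparison of classic versus all-$(sD-1)$ phantom RMSE on CG10 and SS10, using $30$ seeds, $2048$ shrinkage draws per tree, and a paired whole-seed bootstrap interval on $\Delta$RMSE. The paper differs only in emphasis. It leads with CG10 at the standard budget and diagnoses unresolved structure by classic RMSE $0.07971$ exceeding the $0.05$ target while the reported SD sits near $0.05$; there the full-phantom interval is $[-0.01383,+0.01537]$. It uses SS10 as the extreme supporting case, with interval $[-0.03176,+0.00455]$, and pools all $30$ seeds rather than stratifying by spike resolution.
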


We use CG10 to make the correlated problem harder by bending its likelihood contours.
Table~\ref{tab:phantom_evidence_cg10_0p05} and Figure~\ref{fig:cg10_prefix_accuracy} show that classic RMSE is $0.07971$, above the $0.05$ target, although the reported uncertainty is near $0.05$.
This indicates that the classic tree has not resolved enough structure to achieve the requested accuracy.
With phantom conditioning, RMSE initially increases at short prefixes, but using all $sD-1$ phantoms gives RMSE $0.08121$, with no resolved difference from the classic result.
Full conditioning therefore preserves evidence accuracy within experimental precision without repairing the unresolved structure.

\begin{table*}[t]
\centering
\scriptsize
\setlength{\tabcolsep}{4pt}
\caption{CG10, $\mu_{\rm G}=2e_1-3e_2$, classic seeds and evidence-improving allocation: $30$ trees, $2048$ paired shrinkage draws per tree, including all $99=sD-1$ phantoms. RMSE SE and pointwise $95\%$ intervals for RMSE minus classic RMSE use $100{,}000$ paired whole-seed bootstrap resamples. Reported SD is mean shrinkage uncertainty. Likelihood calls are $(91.171\pm2.620)\times10^6$ (mean $\pm$ SD), identical for every prefix.}
\label{tab:phantom_evidence_cg10_0p05}
\begin{tabular}{lrrrl}
\toprule
Prefix & RMSE $\pm$ SE & Reported SD & 95\% coverage & $\Delta$RMSE 95\% interval \\
\midrule
Classic & 0.07971 $\pm$ 0.00661 & 0.04965 & 66.7\% & --- \\
$1D$ & 0.10276 $\pm$ 0.00895 & 0.02975 & 26.7\% & $[+0.00469,+0.03931]$ \\
$2D$ & 0.09800 $\pm$ 0.00923 & 0.02737 & 26.7\% & $[-0.00122,+0.03572]$ \\
$3D$ & 0.09490 $\pm$ 0.00874 & 0.02584 & 26.7\% & $[-0.00347,+0.03195]$ \\
$4D$ & 0.09209 $\pm$ 0.00840 & 0.02467 & 30.0\% & $[-0.00577,+0.02860]$ \\
$5D$ & 0.08808 $\pm$ 0.00822 & 0.02377 & 30.0\% & $[-0.00944,+0.02444]$ \\
$6D$ & 0.08644 $\pm$ 0.00794 & 0.02299 & 30.0\% & $[-0.01061,+0.02249]$ \\
$7D$ & 0.08454 $\pm$ 0.00760 & 0.02231 & 30.0\% & $[-0.01191,+0.02011]$ \\
$8D$ & 0.08310 $\pm$ 0.00742 & 0.02172 & 26.7\% & $[-0.01291,+0.01826]$ \\
$9D$ & 0.08213 $\pm$ 0.00723 & 0.02121 & 23.3\% & $[-0.01337,+0.01672]$ \\
All ($sD-1$) & 0.08121 $\pm$ 0.00697 & 0.02078 & 20.0\% & $[-0.01383,+0.01537]$ \\
\bottomrule
\end{tabular}
\end{table*}

\begin{figure*}[t]
    \centering
    \includegraphics[width=\textwidth]{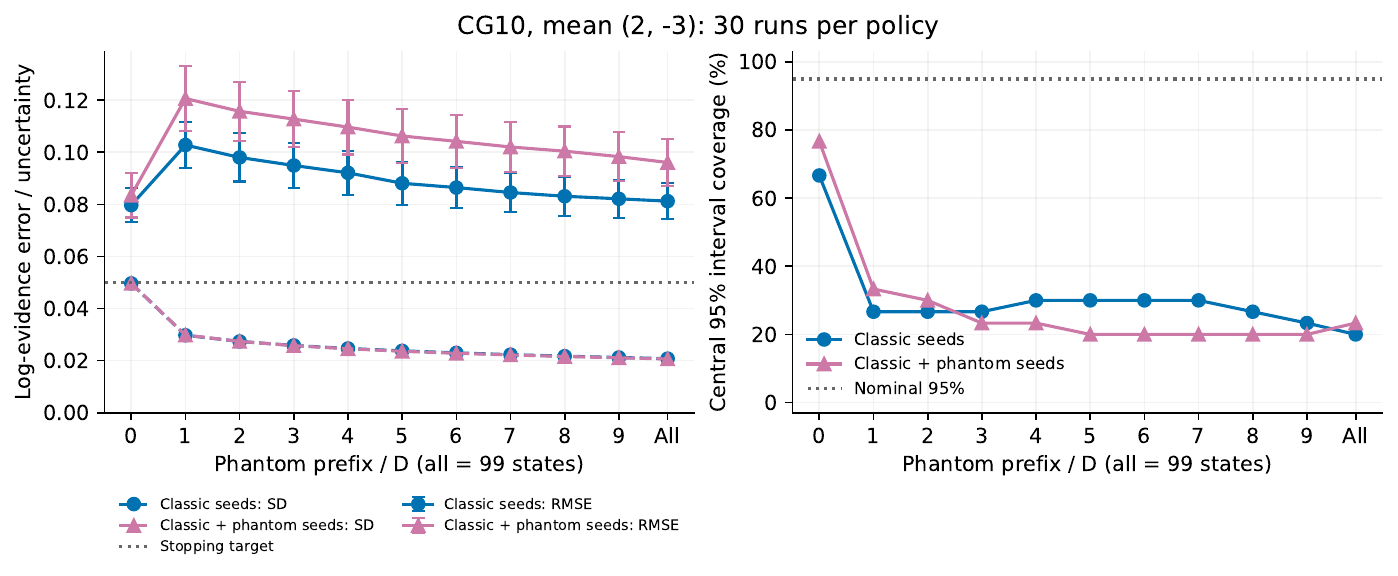}
    \caption{CG10 at $\mu_{\rm G}=2e_1-3e_2$: classic-only versus classic-plus-phantom seed populations, with $30$ trees per policy and evidence-improving allocation. Left: RMSE with bootstrap SE and mean reported shrinkage SD, including all $99$ phantoms. Phantom seeding has higher point RMSE, but the between-policy differences are unresolved. Right: central $95\%$ coverage. Both seed policies substantially understate evidence error.}
    \label{fig:cg10_prefix_accuracy}
\end{figure*}

SS10 supports this claim in the extreme case where an entire mode is missed.
Figure~\ref{fig:ss10_classic_seed_mass} shows the variation in recovered spike mass across runs, with many runs assigning almost no mass to a mode that carries half the posterior mass.
Across $30$ runs, classic and full-phantom evidence RMSE are $0.75004$ and $0.73695$, respectively, showing no resolved improvement or worsening despite the missed structure (Table~\ref{tab:phantom_evidence_ss10_0p05}, Figure~\ref{fig:ss10_prefix_accuracy}).

\begin{table*}[t]
\centering
\scriptsize
\setlength{\tabcolsep}{4pt}
\caption{SS10, classic seeds and evidence-improving allocation, $30$ trees and $2048$ paired shrinkage draws per tree. All $99=sD-1$ phantoms are included at the final endpoint. RMSE SE and pointwise $95\%$ intervals for RMSE minus classic RMSE use $100{,}000$ paired whole-seed bootstrap resamples. Reported SD is the mean within-tree shrinkage uncertainty. Mean likelihood calls are $(32.196\pm17.356)\times10^6$ (mean $\pm$ SD). Costs and classic posteriors are fixed across prefixes.}
\label{tab:phantom_evidence_ss10_0p05}
\begin{tabular}{lrrrl}
\toprule
Prefix & RMSE $\pm$ SE & Reported SD & 95\% coverage & $\Delta$RMSE 95\% interval \\
\midrule
Classic & 0.75004 $\pm$ 0.06465 & 0.04917 & 10.0\% & --- \\
$1D$ & 0.73644 $\pm$ 0.05957 & 0.02880 & 6.7\% & $[-0.03165,+0.00396]$ \\
$2D$ & 0.73766 $\pm$ 0.05920 & 0.02378 & 0.0\% & $[-0.03362,+0.00651]$ \\
$3D$ & 0.73564 $\pm$ 0.05947 & 0.02086 & 0.0\% & $[-0.03644,+0.00462]$ \\
$4D$ & 0.73734 $\pm$ 0.05982 & 0.01890 & 0.0\% & $[-0.03317,+0.00559]$ \\
$5D$ & 0.73667 $\pm$ 0.05978 & 0.01755 & 0.0\% & $[-0.03321,+0.00467]$ \\
$6D$ & 0.73640 $\pm$ 0.05980 & 0.01649 & 0.0\% & $[-0.03355,+0.00464]$ \\
$7D$ & 0.73634 $\pm$ 0.05975 & 0.01563 & 0.0\% & $[-0.03325,+0.00457]$ \\
$8D$ & 0.73678 $\pm$ 0.05961 & 0.01492 & 0.0\% & $[-0.03259,+0.00500]$ \\
$9D$ & 0.73700 $\pm$ 0.05955 & 0.01431 & 0.0\% & $[-0.03226,+0.00497]$ \\
All ($sD-1$) & 0.73695 $\pm$ 0.05958 & 0.01386 & 0.0\% & $[-0.03176,+0.00455]$ \\
\bottomrule
\end{tabular}
\end{table*}

\begin{figure*}[t]
    \centering
    \includegraphics[width=\textwidth]{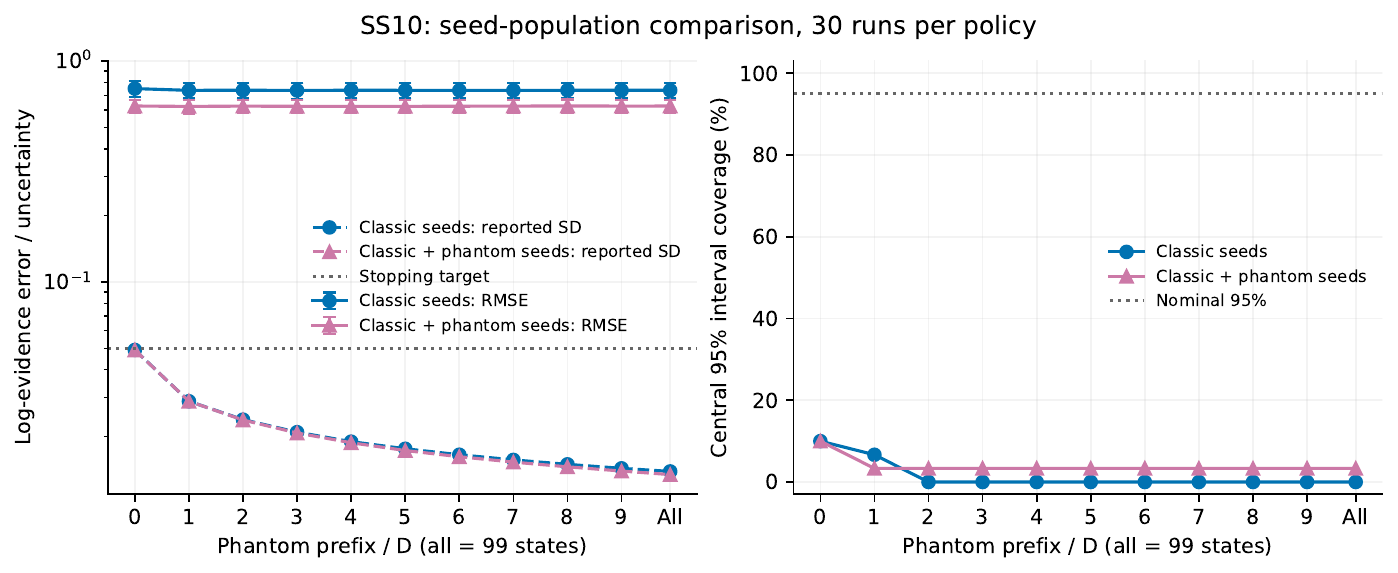}
    \caption{SS10 evidence accuracy and calibration for classic-only seeds and classic-plus-phantom seeds, with $30$ trees per policy and all $99=sD-1$ phantoms at the final endpoint. Left: RMSE with bootstrap SE and mean reported shrinkage SD. Right: central $95\%$ interval coverage. Each policy's prefix sweep holds its classic tree and posterior fixed. Phantom seeding has lower point RMSE, but the paired between-policy difference is unresolved. Both policies remain severely inaccurate and overconfident.}
    \label{fig:ss10_prefix_accuracy}
\end{figure*}

\subsection{Phantom seeds and dependence between chains}\label{sec:phantom_seed_dependence}

We explore hypotheses about whether adding phantom seeds improves mode recovery or increases inter-cluster correlation and worsens evidence accuracy.
By phantom seeding, we mean appending phantom samples to the classic sample population and allowing them to be selected as seeds under the same eligibility rule, with all other sampling settings unchanged.

\begin{empiricalclaim}\label{claim:phantom_seeds}
Including phantoms in the seed population provides no demonstrated accuracy benefit and may worsen estimates through increased inter-cluster correlation.
\end{empiricalclaim}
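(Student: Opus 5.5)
This claim is empirical, so it will be substantiated by a controlled experiment rather than a deductive proof. The plan is to treat it as two hypotheses about phantom seeding, each judged with the paired-bootstrap methodology of Section~\ref{sec:evidence_protocol}:
\begin{itemize}
\item[(i)] phantom seeding improves mode recovery or evidence accuracy;
\item[(ii)] it increases inter-cluster correlation and thereby degrades accuracy.
\end{itemize}

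For each of G10, CG10 and SS10, I will run $R=30$ seeds under two policies:
\begin{itemize}
\item \emph{classic-only} seeding, the default eligible population;
\item \emph{classic-plus-phantom} seeding, where phantoms are appended under the same eligibility rule $L_{p(i_0)}\le L_j<L_{i_0}$.
\end{itemize}
All other settings are held fixed: evidence-improving allocation, $\Delta K=d_0=30D$, the depth condition of Eq.~\eqref{eq:small_remaining_evidence}, and the goal of Eq.~\eqref{eq:benchmark_goal}.

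For each run I will record the following:
\begin{itemize}
\item classic and full-phantom log-evidence RMSE;
\item 95\% coverage;
\item likelihood-call counts;
\item for SS10, the recovered spike mass fraction, compared with the reference $0.5000077444$.
\end{itemize}

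The benefit half of the claim is established by showing that the between-policy RMSE differences have bootstrap $95\%$ intervals containing zero. Figures~\ref{fig:cg10_prefix_accuracy} and~\ref{fig:ss10_prefix_accuracy} already indicate this: on CG10, phantom seeding has higher point RMSE but unresolved; on SS10, it has lower point RMSE but unresolved. I will also check that the spike-mass distribution on SS10 is not shifted toward $0.5$. If it is not, then "no demonstrated benefit" follows directly.

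For the harm half ("may worsen"), I only need a mechanism plus consistent, non-significant evidence, not a resolved deterioration. The mechanism follows from the construction in Section~\ref{sec:phantoms}. A phantom-seeded chain starts at a state correlated with another cluster's chain. This violates the inter-cluster independence assumption, under which the gamma weights $v_c$ add as if the clusters were independent. The posterior over $p_g$ is then too narrow, and the estimates become biased toward shared-ancestry regions.

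To quantify this, I will compare coverage across the two policies and estimate inter-cluster correlation directly. I will compute the correlation of block indicators $\mathbf{1}\{L>L_g\}$ between clusters whose seeds share ancestry, and contrast it with pairs that do not. The claim is supported if two things hold: the correlation is higher under phantom seeding, and the CG10 point RMSE is higher. I will also check that the reduced Kish cluster count $C^{\rm Kish}_g$ is not hiding the effect.

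The main obstacle is statistical power. With $R=30$ and RMSE standard errors near $0.007$ (CG10) and $0.06$ (SS10), differences of the observed size cannot be resolved. I will therefore phrase the conclusion carefully, as absence of demonstrated benefit together with a plausible, directionally supported risk. If needed, I would first try to increase power through paired designs that share random seeds across the two policies, or by adding runs on CG10, where the adverse point estimate appears.
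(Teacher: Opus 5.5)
Your plan is essentially the paper's own substantiation. You rerun with phantoms appended to the seed population under the same eligibility rule and note that the paired between-policy RMSE differences are unresolved on CG10 (higher point RMSE, $0.08121$ to $0.09606$) and on SS10 (lower, $0.73695$ to $0.62529$). You then support ``may worsen'' by arguing that reusing phantoms as seeds introduces inter-cluster correlation, which violates the independence assumption behind the gamma-weight construction.

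Your extra G10 runs and your direct measurement of inter-cluster correlation go beyond the paper, which only asserts the mechanism. They would usefully test why phantom seeds should be more correlated than classic seeds, since classic seeds are themselves chain end states and also induce shared ancestry.
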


We rerun CG10 with phantom seeding.
Figure~\ref{fig:cg10_prefix_accuracy} shows higher RMSE with phantom seeds, increasing from $0.08121$ to $0.09606$ at full conditioning, although the difference is not statistically resolved.
Phantom samples provide stationary seeds, but reusing them to start new chains necessarily introduces correlation between clusters.
Classic samples form the most independent seed population available, providing the stationary and approximately independent seeds assumed by phantom conditioning.

\begin{empiricalclaim}\label{claim:phantom_mode_recovery}
Including phantom seeds does not reliably recover missing or mismeasured mode mass.
\end{empiricalclaim}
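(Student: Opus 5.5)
The claim is empirical, so I would support it with a controlled experiment modelled on the phantom-seed test in Claim~\ref{claim:phantom_seeds}, using SS10 because its reference spike fraction ($0.5000077444$) gives a ground-truth mode mass. I would run the same $R=30$ random seeds under two seed policies. In the first, seeds come only from classic samples. In the second, phantom samples are appended to the eligible seed population under the same rule $L_{p(i_0)}\le L_j<L_{i_0}$. All other settings stay as in Section~\ref{sec:evidence_protocol}: evidence-improving allocation, $d_0=\Delta K=30D$, the depth condition with $\tau_Z=\log(1+10^{-3})$, and the goal condition in Eq.~\eqref{eq:benchmark_goal}. The primary observable for each run is the recovered spike posterior fraction. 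I would compute it by summing the classic posterior weights $w_g$ (and $w^{(k)}_g$ on plateaus) over samples nearer $\mu_1$ than $\mu_2$ in Mahalanobis distance, then normalising. The secondary observable is the log-evidence error against $\log Z_{\rm SS10}$ and against the slab-only value $-24.84909748$.

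The key steps, in order, are these. First, for each policy, show the across-run distribution of spike fraction, in the style of Figure~\ref{fig:ss10_classic_seed_mass}. Second, count the runs whose spike fraction is below a small threshold such as $0.05$ (mode lost) and the runs within a tolerance band around $0.5$ (mode recovered). Third, compare the policies with paired whole-seed bootstrap intervals on the difference in mean absolute spike-fraction error and in evidence RMSE, as in Tables~\ref{tab:phantom_evidence_ss10_0p05}--\ref{tab:phantom_evidence_cg10_0p05}. The claim is supported if the bootstrap interval on the difference contains zero, or if phantom seeding still leaves a large fraction of runs with near-zero spike mass. In either case phantom seeding would not be \emph{reliably} recovering the mode.

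I would back this with a mechanistic argument for why phantom seeds cannot fix mode evaporation in general. A phantom in $\mathcal{P}_i$ is generated from a chain that started at a classic seed, so it can lie in the spike only if that chain itself tunnelled there. Phantoms therefore enlarge the seed population only within the basins the chains already reach. They raise the seed count in the spike roughly in proportion to the spike states already visited, and they create no new inter-mode transitions. Any gain in mode recovery must then come from extra diversity among the chain starting points, and phantom seeding adds that diversity in a correlated way, which is the issue raised in Claim~\ref{claim:phantom_seeds}.

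The main obstacle is statistical power. With $R=30$ runs and a bimodal outcome (spike found or lost), only large effects can be resolved, and the likelihood-call budget varies enormously across runs ($\pm17\times10^6$). I would first report the outcomes as paired per-seed contingency counts and check them with McNemar's test. I would also report likelihood calls per policy, so that any apparent recovery cannot simply be put down to extra compute. I would phrase the conclusion as ``no reliable recovery,'' not as a demonstrated harm.
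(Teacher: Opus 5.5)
Your proposal is correct and follows the paper's argument: the paper runs the same $30$-seed SS10 comparison of classic-only versus classic-plus-phantom seed populations and finds essentially unchanged mode-mass RMSE ($0.39623$ versus $0.39752$, against a reference spike fraction of about $0.5$), with the lower phantom-seed evidence RMSE ($0.62529$ versus $0.73695$) unresolved and not reflected in mode mass. The differences are minor---the paper uses the soft responsibility $L_1/(L_1+L_2)$ rather than your hard Mahalanobis assignment (equivalent here given the mode separation) and gives no mechanistic argument or McNemar test---but note that the paper rests the conclusion on the large absolute error under both policies, which your second criterion captures, whereas a bootstrap interval on the difference that merely contains zero would not by itself establish the claim.
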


We apply the same seed-selection comparison to SS10, where a narrow mode carries half the posterior mass, while the spike-to-slab volume ratio in prior space is $10^{-5}$.
We measure its recovered mass using the component responsibilities,
\begin{align}
    \widehat m_1=\sum_i w_i^{\rm classic}
    \frac{L_1(x_i)}{L_1(x_i)+L_2(x_i)}.
\end{align}
Figure~\ref{fig:ss10_classic_seed_mass} shows essentially unchanged mode-mass RMSE: $0.39623$ with classic seeds and $0.39752$ with phantom seeds.
Neither seed population reliably recovers the mode mass.
Full-phantom evidence RMSE is lower with phantom seeds, $0.62529$ versus $0.73695$ (Figure~\ref{fig:ss10_prefix_accuracy}), but this difference is not statistically resolved and does not translate into reliable recovery of mode mass.

\begin{figure*}[t]
    \centering
    \includegraphics[width=\textwidth]{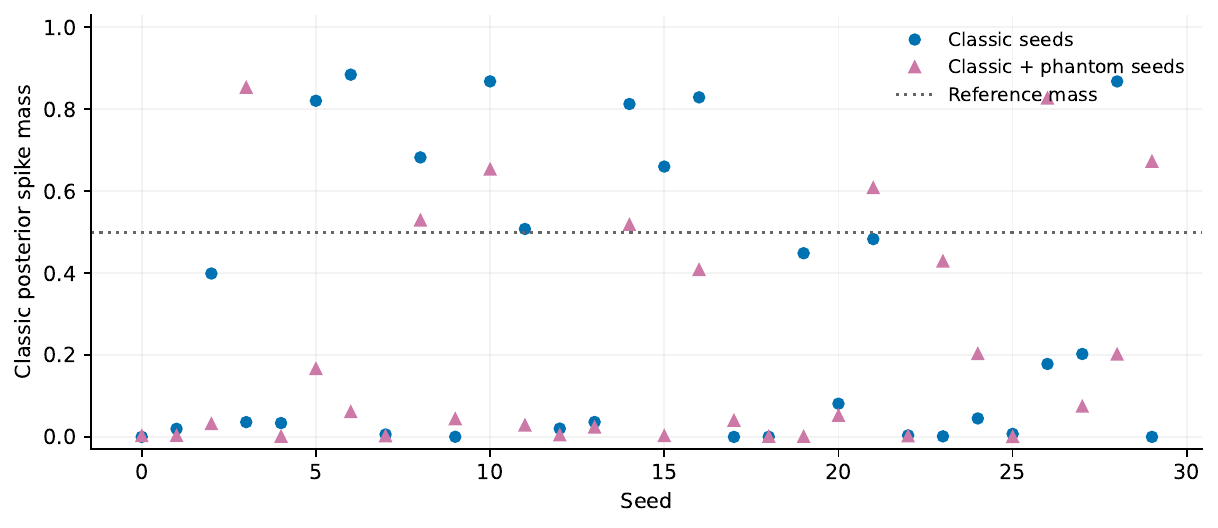}
    \caption{Classic posterior spike responsibility mass for all $30$ SS10 seeds under each seed-selection policy. The dotted line is the finite-prior reference mass. Both policies show severe mass errors, with no resolved difference in mode-mass RMSE. Within each policy, every evidence-prefix reduction shares these exact classic posterior masses.}
    \label{fig:ss10_classic_seed_mass}
\end{figure*}

\subsection{Mode death comparison with dynesty and polychord}\label{sec:mode_death_comparison}

We compare JAXNS with other packages on the same SS10 problem to assess how they handle mode death.
We run \texttt{dynesty} 3.1.0 and PolyChordLite 1.22.3 using similar likelihood-evaluation budgets, with $30$ random seeds for each sampler, and measure mode-mass RMSE.
\texttt{dynesty} uses $300$ initial and batch live points, multiple-ellipsoid bounds, $30$-step random walks, and evidence allocation.
Its likelihood budgets are matched to the corresponding classic-seed JAXNS runs, with completed batches exceeding those budgets by $0.188\%$ on average.

PolyChord uses clustering, $50$ whitened slice repeats, and a $10^{-3}$ remaining-evidence threshold.
Independent cost-only pilot runs set its live populations to $2682$--$10063$ to match the individual likelihood budgets.
Its realised mean likelihood work is $28.5\%$ greater than the JAXNS baseline, so this match is approximate.

Table~\ref{tab:ss10_sampler_comparison} shows that \texttt{dynesty} performs worse than JAXNS, while PolyChord performs better.
PolyChord explicitly tracks cluster volumes when allocating new live points \citep{2015MNRAS.453.4384H}.
We conjecture that its better recovery comes from identifying modes early and preserving their relative prior volumes, suggesting that explicit mode tracking is likely required to mitigate mode death.

\begin{table*}[t]
\centering
\scriptsize
\setlength{\tabcolsep}{4pt}
\caption{SS10 mode-mass recovery and mean likelihood work, $30$ seeds per row. The reference spike fraction is $0.5000077444$. External sampler budgets were approximately matched to the classic-seed JAXNS baseline, with PolyChord using $28.5\%$ more likelihood evaluations.}
\label{tab:ss10_sampler_comparison}
\begin{tabular}{llrrr}
\toprule
Sampler / seeds & Stop & Calls ($10^6$) & Mean mass & Mass RMSE\\
\midrule
JAXNS / classic & $0.05$ & 32.196 & 0.29798 & 0.39623\\
JAXNS / phantom & $0.05$ & 28.742 & 0.21483 & 0.39752\\
\texttt{dynesty} & matched work & 32.257 & 0.01693 & 0.48348\\
PolyChord & approximate match & 41.384 & 0.36297 & 0.27308\\
\bottomrule
\end{tabular}
\end{table*}

\subsection{Allocation efficiency and dimensionality}

\begin{empiricalclaim}\label{claim:evidence_allocation}
Evidence-improving allocation approximately halves the likelihood evaluations required without a detectable loss of evidence accuracy.
\end{empiricalclaim}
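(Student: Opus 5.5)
This claim is empirical, so I would substantiate it with a paired benchmark rather than a deductive proof, following the same protocol used for Claims~\ref{claim:resolved_tree}--\ref{claim:phantom_mode_recovery}. For each problem (G10 at minimum, and CG10 as a harder check) I would run $R=30$ seeds under two allocation policies. The first is uniform allocation, Eq.~\eqref{uniform_allocation_gap}. The second is evidence-improving allocation, Eq.~\eqref{utility_allocation_gap}. Both policies share $d_0=\Delta K=30D$, the same depth condition with $\tau_Z=\log(1+10^{-3})$, and the same goal Eq.~\eqref{benchmark_goal}. Because both runs stop at the same target classic uncertainty $\widehat\sigma^{\rm exp}_{\log Z}<0.05$, the natural comparison is the cost to reach a fixed reported precision. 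Alongside cost I would record classic and full-phantom RMSE and $95\%$ coverage.

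The key steps, in order, are as follows.
\begin{enumerate}
\item Record the per-seed total likelihood calls under each policy and report the mean ratio $\bar N_{\rm evid}/\bar N_{\rm unif}$ with a bootstrap interval. The claim of ``approximately halves'' corresponds to a ratio near $0.5$. Figure~\ref{fig:g10_prefix_accuracy} already reports a $41.4\%$ saving on G10, and I would complement it with further problems or dimensions.
\item Compute RMSE under both policies, for the classic estimate and with all $sD-1$ phantoms. Form the paired whole-seed bootstrap interval on $\Delta$RMSE, evidence minus uniform, using $100{,}000$ resamples as in the tables.
\item Declare ``no detectable loss'' if that interval contains zero, or lies below it.
\item Check that coverage is comparable, so that the saving does not come from miscalibration.
\end{enumerate}

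The conceptual justification I would give alongside the data runs as follows. The utility $\bar U_g$ (Appendix~\ref{sec:allocation_targets}) concentrates lineages on blocks whose shrinkage variance contributes most to $\mathrm{Var}\log Z$, namely those near the peak of $L_gX_g$. Under Proposition~\ref{prop:race_tree}, a block with $K_g$ lineages has $\mathrm{Var}(s)\approx 1/K_g$. Minimizing the weighted sum of these variances subject to a fixed total cost $\sum_g K_g$ therefore favours non-uniform $K_g$. This explains the reduced calls at equal goal uncertainty.

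The main obstacle is attaining statistical resolution. With $30$ seeds the RMSE standard errors are about $0.004$--$0.006$, so an equal-accuracy conclusion can only be a non-rejection, and the paper's wording ``without a detectable loss'' should be kept. A second issue is that cost differences could be confounded by the goal-iteration granularity $\Delta K$, since runs overshoot the goal by up to one iteration. To address this I would also compare RMSE at matched likelihood budgets, by evaluating uniform-allocation runs at intermediate goals. If the ratio is not near one half on every problem, I would restrict the claim to the resolved-structure problems.
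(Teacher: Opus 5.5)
Your plan is essentially the paper's own substantiation. The paper runs G10 under uniform and evidence-improving allocation to the common $0.05$ classic-uncertainty goal and reports $(74.848\pm1.255)\times10^6$ versus $(127.723\pm2.852)\times10^6$ likelihood calls (a $41.4\%$ saving), then reads ``no detectable loss'' off Figure~\ref{fig:g10_prefix_accuracy}, where the between-policy classic and full-phantom RMSE differences are unresolved; your CG10 runs, cost-ratio bootstrap, coverage check and matched-budget comparison go beyond this.

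One correction to your heuristic, which does not bear on the empirical argument: a block's log-shrinkage variance is $1/K_g^2$ (the $1/K$ is per unit race time, at cost $\int K\,\dd s$), and $|B_h|$ from Eq.~\eqref{eq:evidence_sensitivity} is near one on every contour shallower than the posterior bulk and decays through it, so evidence-improving allocation spreads lineages over that whole approach and thins them in and beyond the bulk, rather than concentrating them at the peak of $L_gX_g$.
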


Uniform allocation recovers the allocation of static nested sampling, which we use as the gold standard for this comparison.
We compare evidence-improving and uniform allocation on G10 at the same $0.05$ classic uncertainty target.
Evidence-improving allocation requires $(74.848\pm1.255)\times10^6$ likelihood evaluations, compared with $(127.723\pm2.852)\times10^6$ for uniform allocation.
This saves $41.4\%$ of the likelihood evaluations.
Figure~\ref{fig:g10_prefix_accuracy} shows comparable RMSE under both policies for classic and phantom-conditioned estimates, so the reduction in work comes without a detectable loss of accuracy.

\begin{empiricalclaim}\label{claim:dimension}
Phantom conditioning combined with evidence-improving allocation remains effective as dimension increases, with larger evidence-accuracy improvements at higher dimensions.
\end{empiricalclaim}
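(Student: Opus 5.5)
This claim is empirical, so the plan is to substantiate it with a dimension sweep rather than a deductive proof, following the protocol of Section~\ref{sec:evidence_protocol}. I would pick a problem family whose structure the classic tree resolves reliably, so that Claim~\ref{claim:resolved_tree} applies at every dimension and mode discovery or unresolved curvature do not confound the comparison (Claim~\ref{claim:unresolved_structure}). The natural choice is the correlated Gaussian generalised to dimension $D$: $L(x)=\phi_D(x\mid\mu_{\rm G},\Sigma_\ell)$ with $\mu_{\rm G}=2e_1-3e_2$, unit diagonal and $0.99$ off-diagonal, and prior $\mathcal{N}(0,I)$. Its reference evidence $\log\phi_D(\mu_{\rm G}\mid 0,I+\Sigma_\ell)$ is available in closed form for every $D$. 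I would run, say, $D\in\{5,10,20,40\}$, holding everything else fixed: evidence-improving allocation with $\Delta K=d_0=30D$, $s=10$ slice steps per dimension (so $sD-1$ phantoms per cluster), the depth condition $\tau_Z=\log(1+10^{-3})$, the goal condition of Eq.~\eqref{eq:benchmark_goal}, and $R=30$ seeds per dimension.

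For each dimension I would compute the classic RMSE and the full-phantom RMSE from the same trees. Because the two estimates are paired, their relative reduction
\[
\rho_D = 1-\mathrm{RMSE}_{\rm all}/\mathrm{RMSE}_{\rm classic}
\]
is a clean per-dimension statistic. I would attach an uncertainty to $\rho_D$ with a paired whole-seed bootstrap ($100{,}000$ resamples), as in Table~\ref{tab:phantom_evidence_g10_0p05}. The claim splits into two parts:
\begin{itemize}
\item \emph{remains effective}: every bootstrap interval for $\Delta\mathrm{RMSE}$ excludes zero on the negative side;
\item \emph{larger at higher dimension}: $\rho_D$ increases with $D$, checked by a bootstrap test on a monotone trend (for example the slope of $\rho_D$ against $\log D$).
\end{itemize}
I would also report coverage, to confirm it stays consistent with Claim~\ref{claim:coverage}.

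The mechanism I expect is simple. Each classic child carries $sD-1$ phantoms, so the phantom-to-classic count ratio grows linearly in $D$. The classic budget needed to hit the $0.05$ goal grows roughly like $D$ as well, since the information content $\sim D\log$ sets the number of blocks. So the Monte Carlo information added per block by phantoms grows with $D$, and more clusters pass the $C^{\rm Kish}_g\ge C_{\min}$ threshold.

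The main obstacle is keeping the problem "resolved" as $D$ grows. With fixed $s$, slice chains in high-dimensional correlated contours may mix poorly. That would inflate intra-cluster correlation and could break the approximate inter-cluster independence assumed in Section~\ref{sec:phantoms}, which would mimic the CG10 regime and flatten $\rho_D$. I would monitor this first by checking that the classic RMSE stays near the $0.05$ target at each $D$. If it drifts above, I would increase $s$ or switch to GMM directions uniformly across all $D$ rather than per dimension, so the trend remains attributable to dimensionality rather than to tuning.
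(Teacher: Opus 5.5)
Your plan is essentially the paper's substantiation: the paper runs the same correlated Gaussian (G10, G20, G40) at $D=10,20,40$ with $d_0=\Delta K=30D$, $10D$ slice transitions per chain and the $0.05$ goal. It finds classic RMSE staying near $0.05$ while full-phantom RMSE falls by $31.8\%$, $54.2\%$ and $72.1\%$, and it offers your growing-$C^{\rm Kish}_g$ mechanism as a conjecture. Your extra $D=5$ point, per-dimension paired intervals and bootstrap trend test on $\rho_D$ are a stricter version of the same argument, since the paper reports only RMSEs with bootstrap SEs and point percentage improvements.
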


We test this claim using the Gaussian model at $D=10,20,40$ (G10, G20, and G40) with evidence-improving allocation.
Initial root lineages and allocation increments scale as $30D$, while execution width and the number of slice transitions per chain scale as $10D$.
Table~\ref{tab:dimension_comparison} and Figure~\ref{fig:dimension_comparison} compare classic and full-phantom RMSE together with the likelihood cost.
Classic RMSE remains near the $0.05$ target in all three dimensions, while full-phantom RMSE falls to $0.03500$, $0.02085$, and $0.01461$, respectively.
The corresponding improvements are $31.8\%$, $54.2\%$, and $72.1\%$.
Thus the accuracy benefit grows with dimension, although the likelihood cost also increases.
We conjecture that the improvement grows with dimension because the effective participating-cluster count $C_g^{\rm Kish}$ (Eq.~\ref{eq:kish_cluster_count}) increases, supplying more independent phantom Monte Carlo observations.

\begin{table*}[t]
\centering
\scriptsize
\setlength{\tabcolsep}{3pt}
\caption{Completed Gaussian dimension cohorts with classic seeds and evidence-improving allocation, $30$ runs per dimension. RMSE uncertainties are bootstrap SEs. Likelihood counts are mean $\pm$ across-run SD. Full retention uses $99$ phantoms in $10$ dimensions, $199$ in $20$, and $399$ in $40$.}
\label{tab:dimension_comparison}
\begin{tabular}{rrrr}
\toprule
$D$ & Classic RMSE & Full RMSE & Calls ($10^6$) \\
\midrule
10 & $0.0513\pm0.0061$ & $0.0350\pm0.0038$ & $74.85\pm1.26$ \\
20 & $0.0455\pm0.0066$ & $0.0208\pm0.0018$ & $488.08\pm5.25$ \\
40 & $0.0524\pm0.0051$ & $0.0146\pm0.0020$ & $3486.33\pm53.40$ \\
\bottomrule
\end{tabular}
\end{table*}

\begin{figure*}[t]
    \centering
    \includegraphics[width=\textwidth]{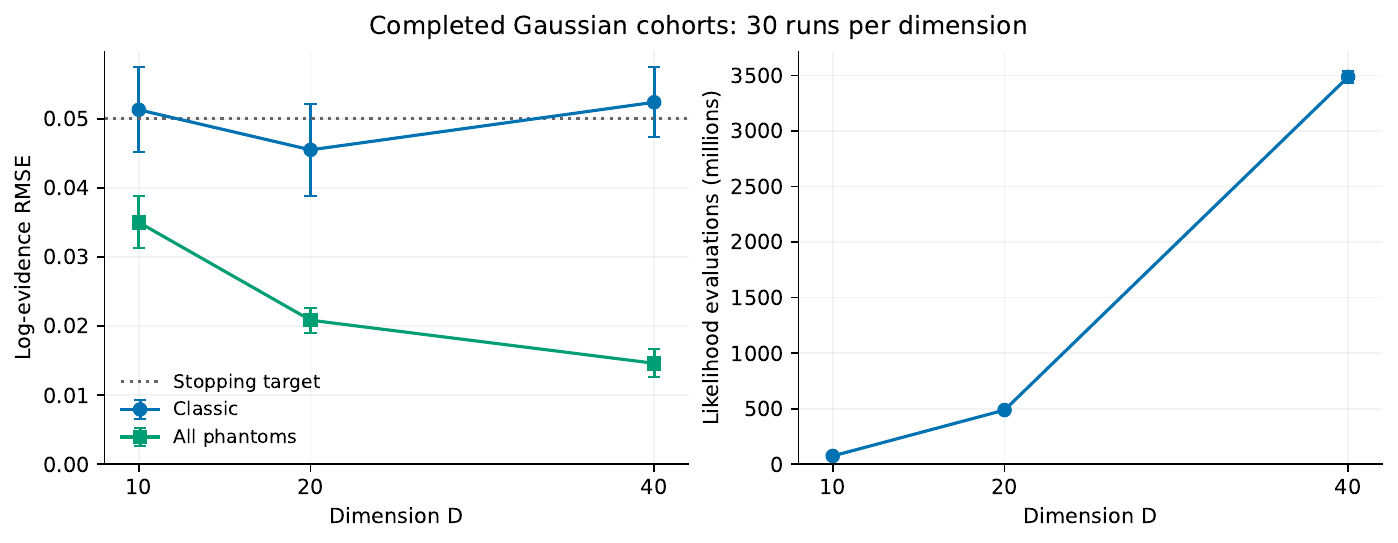}
    \caption{Completed Gaussian dimension cohorts at $\mu_{\rm G}=2e_1-3e_2$, with classic seeds and evidence-improving allocation. Left: classic and full-retention evidence RMSE, with bootstrap SE. Full retention uses $99$, $199$, and $399$ phantoms in $10$, $20$, and $40$ dimensions, respectively. Right: mean likelihood evaluations with across-run SD. Each point uses all $30$ seeds.}
    \label{fig:dimension_comparison}
\end{figure*}

\subsection{Posterior allocation after evidence stopping}

\begin{empiricalclaim}\label{claim:posterior_allocation}
Posterior-improving allocation cheaply increases effective sample size within the resolved structure, and does not recover missing structure.
\end{empiricalclaim}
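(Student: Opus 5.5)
This claim is empirical, so the plan is to substantiate it with an experiment in the same protocol as the earlier claims, combined with a short structural argument for why posterior-improving allocation cannot recover missing structure. The experiment uses the evidence-stopped runs of Section~\ref{sec:evidence_protocol} as a fixed starting state: $R=30$ seeds, evidence-improving allocation, and stopping at Eq.~\eqref{eq:benchmark_goal}. From that state I would continue each run with the posterior-improving target of Eq.~\eqref{eq:utility_allocation_gap}. The continuation goal would be a doubling of the classic posterior Kish effective sample size, $(\sum_i w_i)^2/\sum_i w_i^2$, where the weights $w_i$ are those defined in Section~\ref{sec:bayesian_shrinkage}. For each run I would record the incremental likelihood evaluations as a fraction of the evidence-stage cost. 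This directly tests the ``cheaply increases'' part, with the abstract's $17.5\%$ as the expected outcome.

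For the ``within resolved structure'' part I would use G10 and CG10, whose structure is resolved or nearly resolved. There I would check that the gain in effective sample size is not accompanied by any change in evidence RMSE or in posterior moments beyond the bootstrap interval. The explanation is that the posterior-improving utility $\bar U_g$ concentrates new lineages on blocks with large $\lambda_g(X_{g-1}-X_g)$. By Eq.~\eqref{eq:edge_lineage_coverage}, each new thread raises $K_g$ only across the posterior bulk. That narrows the shrinkage there and equalises the weights, but it does not touch the tails.

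For the negative half I would rerun SS10 and compare the spike responsibility mass $\widehat m_1$ before and after the posterior stage, using the paired whole-seed bootstrap. The structural argument is that new threads are seeded, per the seed-selection rule, only from eligible classic samples with $L_{p(i_0)}\le L_j<L_{i_0}$. A run that has already lost the spike has no eligible seed inside it at the contours where the spike dominates. Slice moves from slab seeds tunnel into a $10^{-5}$-volume region with negligible probability, so the spike's weight cannot be repaired by allocation alone. I expect the mode-mass RMSE to remain statistically unchanged, which is consistent with the mode-death analysis of Section~\ref{sec:mode_death_comparison}.

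The main obstacle is separating ``does not recover'' from ``recovers too slowly to detect'' with only $30$ seeds. Table~\ref{tab:phantom_evidence_ss10_0p05} shows that SS10 errors are dominated by between-run variance. My first attempt would therefore use paired per-seed differences in $\widehat m_1$ rather than unpaired RMSEs, and report a bootstrap interval on the mean difference that excludes any practically meaningful recovery.
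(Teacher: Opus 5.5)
Your ESS-doubling continuation (resume the evidence-stopped runs under posterior-improving allocation, stop at $\mathrm{ESS}_{\rm final}\ge 2\,\mathrm{ESS}_{\rm initial}$, report the extra likelihood cost) is exactly the paper's experiment. The gap is in the negative half, and it comes from misreading CG10. You class CG10 as ``resolved or nearly resolved'', but Claim~\ref{claim:unresolved_structure} establishes the opposite. Its classic RMSE is $0.07971$ against the $0.05$ target, with $66.7\%$ coverage (Table~\ref{tab:phantom_evidence_cg10_0p05}), and the paper reads this as the tree not having resolved enough structure. The paper supports \emph{both} halves of the claim with one experiment on CG10 alone. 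The resumed runs go from mean Kish ESS $28{,}443$ to $57{,}938$ for $17.5\%$ more likelihood calls, which is the cheap refinement of what is resolved. Meanwhile classic and full-phantom evidence RMSE both stay near $0.08$, still above target, so the unresolved structure is not repaired. In your design that persistent $0.08$ would be filed as a null check ``within resolved structure''. You would therefore misread the very observation that supports the second half, and instead rest that half on an SS10 continuation the paper never runs. Your mechanism for the positive half also does not match Eq.~\eqref{eq:utility_allocation_gap}: since $U^P_g>0$ for every block below the top, $\lceil\Delta K\bar U_g\rceil\ge1$ there, so new threads are not confined to the posterior bulk.

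The SS10 prediction is plausible, but your argument does not secure it. In a run that has lost the spike, the relevant parent contours are slab contours. There, slab samples are eligible seeds and the constrained prior does contain the spike basin. The lack of seeds at contours where the spike dominates is therefore beside the point. Nor is $10^{-5}$ the relevant share; that is the volume contrast at equal Mahalanobis radius. At a common likelihood contour, the spike's $10^5$-times higher peak gives it a volume relative to the slab of about $10^{-5}(1+5\ln 10/t)^5$, with $t=\ln(L^{\rm slab}_{\max}/\lambda)$. That is orders of magnitude above $10^{-5}$ throughout the slab bulk, and percent-level near its upper end. The sound form of your argument is different: seeds are drawn from the existing classic population, so new lineages inherit its basin composition unless the slice sampler tunnels between basins. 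How often it tunnels at these contours is an empirical question. The SS10 experiment would have to be run rather than argued, and its null outcome is not guaranteed in advance.
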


We resume each of the $30$ CG10 seed experiments using posterior-improving allocation, with a goal condition of doubling the classic Kish ESS,
\begin{align}
    \mathrm{ESS}_{\rm Kish}=\frac{\left(\sum_i w_i^{\rm classic}\right)^2}{\sum_i(w_i^{\rm classic})^2},
    \qquad \mathrm{ESS}_{\rm final}\ge2\,\mathrm{ESS}_{\rm initial}.
\end{align}
Mean ESS increases from $28{,}443$ to $57{,}938$ for only $17.5\%$ additional likelihood evaluations.
Both classic and full-phantom evidence RMSE remain near $0.08$, above the $0.05$ target.
Posterior-improving allocation therefore cheaply refines the resolved structure without repairing the evidence error caused by unresolved structure.

\section{Conclusions}\label{sec:conclusion}

This paper reformulates NS shrinkage as racing lineages, which decouples the shrinkage arithmetic from the sample generation process.
We introduce a Bayesian model for treating plateaus.
We show that plateaus censor the race rank: we know that the race finishes, but not when.
Building on this model, we condition on phantom samples by using their Monte Carlo estimates of contour exceedance probabilities as observations.
The resulting method uses information that Markov-chain constrained samplers already generate, while preserving the order-statistic law of the classic sequence.
This framework is implemented in JAXNS v3, which is presented along with its implementation design.

Experiments support phantom conditioning as a way to improve evidence accuracy when the classic race tree resolves sufficient likelihood structure.
Using all phantom samples gives the best accuracy within experimental precision, with larger improvements at higher dimensions.
We conjecture that this stronger improvement arises because the effective number of independent phantom Monte Carlo observations increases with dimension.
When relevant structure is unresolved, full phantom conditioning leaves evidence accuracy unchanged within experimental precision.
Phantom-conditioned uncertainties are overconfident, and unresolved structure can also make classic uncertainties overconfident.

Including phantoms in the seed population provides no demonstrated accuracy benefit and may worsen evidence estimates through increased inter-cluster correlation.
Phantom seeds do not reliably recover mode mass.
We conjecture that identifying modes early and tracking their relative prior volumes is likely required to mitigate mode death.

Evidence-improving allocation approximately halves the likelihood evaluations required relative to uniform allocation without a detectable loss of evidence accuracy.
Combined with full phantom conditioning, its accuracy benefit persists as dimension increases.
Posterior-improving allocation cheaply increases the effective sample size within the resolved structure, and does not recover missing structure.

\backmatter

\bmhead{Supplementary information}

Code and data required to reproduce the experiments is provided in the JAXNS repository at \url{https://www.github.com/joshuaalbert/jaxns}.

\bmhead{Acknowledgments}

JA would like to thank John Skilling for discussions around nested sampling, as well as Will Handley, Andrew Fowlie, and Johannes Buchner for pointing out flaws in earlier attempts to use phantom samples.

\begin{appendices}

\section{Alternative Allocation Targets}\label{sec:allocation_targets}
The allocation targets define how likelihood evaluation work is dedicated per outer loop.
The two utility-based gaps are,
\begin{align}
    G_g^{Z,k} &= \left\lceil\Delta K \bar{U}^Z_g\right\rceil,\\
    G_g^{P,k} &= \left\lceil\Delta K \bar{U}^P_g\right\rceil,
\end{align}
where $\Delta K > 0$, and $\bar{U}^Z_g$ and $\bar{U}^P_g$ are defined below.

Evidence improving allocation seeks to allocate new samples to contours by maximising the expected reduction in evidence variance.
A key observation is that a new child generated from parent block $g$ changes the parent out-degree as $d_g\to d_g+1$, Eq.~\ref{eq:K_recursion}.
The probability of that sample reaching $L_h$ is given by,
\begin{align}
    T^Z_{g h}
    \triangleq
    \Prob_{x\sim\pi_{L_g}}(L(x)>L_h)
    = \frac{X_h}{X_g} \mathbf{1}\{h>g\}.
    \label{eq:evidence_reach_probability}
\end{align}
That is, the probability that we add one additional active lineage count to blocks $g+1$ to $l$ is $T^Z_{g h}$.

We will define utility function as the expected reduction in variance.
To simplify calculation, we will use a first-order delta approximation to the log-evidence variance in terms of the log-shrinkage variance.
Structurally, we have $\mathrm{Var}_\theta[f] \approx \sum_\theta (\partial_\theta f)^2\, \mathrm{Var}[\theta]$.
Applying this to $\log Z$ given by Eq.~\ref{eq:block_quadrature} with block shrinkage $p_{>h} \sim \Beta(\alpha_h, \beta_h)$, and recalling that $X_h = p_{>h} X_{h-1} = \prod_j^h p_{>j}$ we have,
\begin{align}
    B_h \triangleq \frac{\partial \log Z}{\partial \log p_{>h}}
    =&
    \frac{
    L_h X_h
    -\sum_{j > h} L_j( X_{j-1}- X_j)
    }{ Z},\label{eq:evidence_sensitivity}\\
    \mathrm{Var}[\log p_{>h}] =& \psi_1(\alpha_h) - \psi_1(\alpha_h + \beta_h),
\end{align}
where $\psi_1$ is the tri-gamma function.
The first term in Eq.~\ref{eq:evidence_sensitivity} is the contribution from changing the block $h$, and the second term is the accumulated downstream effect because $p_{>h}$ appears in every later term.
Now the effect of adding one extra active lineage is that $\alpha_h \to \alpha_h + 1$, so the reduction in variance of log-evidence under this delta approximation is given by,
\begin{align}
    \Delta_h \mathrm{Var}[\log Z] \approx B_h^2 \left(\frac{1}{\alpha_h^2} - \frac{1}{(\alpha_h + \beta_h)^2}\right),\label{eq:delta_var_logZ}
\end{align}
where we used the relationship $\psi_1(z) - \psi_1(z + 1) = 1/z^2$.
Now, the utility is the expectation of Eq.~\ref{eq:delta_var_logZ} over the transition probability, Eq.~\ref{eq:evidence_reach_probability},
\begin{align}
    U^Z_g = \frac{1}{X_g}\sum_{h > g} X_h B_h^2 \left(\frac{1}{\alpha_h^2} - \frac{1}{(\alpha_h + \beta_h)^2}\right).\label{eq:evidence_improving_utility}
\end{align}
This can be computed efficiently in two cumulative reductions.

Posterior improving allocation seeks to increase the effective sample size, by allocating samples low-information shells defined by the half-open likelihood intervals, $(L_{h-1}, L_h]$.
A key observation is that a uniform draw within contour $L_g$ will land in shell $h$ with probability,
\begin{align}
    T^P_{g h}
    \triangleq
    \Prob_{x\sim\,\pi_{L_{g}}}(L_{h-1} < L(x) \le L_h)
    \approx \frac{ X_{h-1} -  X_{h}}{ X_g} \,\mathbf{1}\{h>g\}.
\end{align}
Recalling the posterior shell masses, Eq.~\ref{eq:post_weight_cont}, let us use Kish's effective sample size as a proxy for effective sample size,
\begin{align}
    \mathrm{ESS}^{\rm Kish} =& \frac{W^2}{Q}\\
    W\triangleq& \sum_{j=1}^{G} w_j,\\
    Q\triangleq& \sum_{j=1}^{G} w_j^2.
\end{align}
If a new sample lands in shell $h$, it splits the shell into two sub-shells.
In this case $W$ is unchanged under additivity, and only $Q$ changes.
Writing $u\sim\,\mathcal{U}[0, 1]$ for the fraction of the shell mass assigned to the lower sub-shell, the old contribution $w_h^2$ to $Q$ is replaced by $u^2 w_h^2+(1-u)^2 w_h^2$.
Thus,
\begin{align}
    \mathrm{ESS}^{\rm Kish}_{h}(u)
    =\frac{W^2}{Q - 2u(1-u)w_h^2}.
\end{align}
If we marginalise over $u$ the change of $\mathrm{ESS}^{\rm Kish}$ due to splitting in $h$ we get,
\begin{align}
    \Delta\mathrm{ESS}^{\rm Kish}_h
    \triangleq&
    \int_0^1
    \left(\mathrm{ESS}^{\rm Kish}_h(u)-\mathrm{ESS}^{\rm Kish}\right)\,\dd u \\
    &=
    W^2\left[
    \frac{
    \tan^{-1}\left(\sqrt{w_h^2/(2Q-w_h^2)}\right)
    }
    {\sqrt{(Q-w_h^2/2)(w_h^2/2)}}
    -\frac{1}{Q}
    \right],
    \label{eq:posterior_shell_value}
\end{align}
with the continuous limiting value $\Delta\mathrm{ESS}^{\rm Kish}_h \to 0$ as $w_h \to 0$.
A simple conservative approximation is obtained by replacing the denominator with its expectation,
\begin{align}
    \Delta\mathrm{ESS}^{\rm Kish}_h
    \approx
    \frac{W^2}{Q - w_h^2/3} - \frac{W^2}{Q}.
\end{align}
This is conservative because $1/x$ is convex so $\E[1/x] \ge 1/\E[x]$.
Now the utility is found by marginalising over the shell transition probabilities that a child from parent block $g$ lands in that shell $h$ gives the posterior-improving utility,
\begin{align}
    U^P_g
    =
    \frac{1}{X_g}\sum_{h > g}
     (X_{h-1}- X_h)\,\Delta\mathrm{ESS}^{\rm Kish}_h .
    \label{eq:posterior_improving_utility}
\end{align}

\end{appendices}

\bibliography{cite}

\end{document}